\title{Status substitution and conspicuous consumption\footnote{Ghiglino: University of Essex (email: \emph{cghig@essex.ac.uk}), Langtry: University of Bristol (email: \emph{alastair.langtry@bristol.ac.uk}). We are grateful to Matt Elliott and Nicole Tabasso for helpful comments. This work was supported by the Economic and Social Research Council [award reference ES/P000738/1] and the Janeway Institute. Any remaining errors are the sole responsibility of the authors. }} 
\author{Christian Ghiglino \and Alastair Langtry}
\date{\today} 
\begin{document}
\maketitle
\begin{abstract}
\noindent This paper adapts ideas from social identity theory to set out a new framework for modelling conspicuous consumption. Agents derive utility from their consumption of a status good and from belonging to an identity group with high status good consumption. Importantly, these two sources of utility are substitutes. Agents also feel pressure to conform with their neighbours in a network. This framework can rationalise a set of seemingly conflicting stylised facts about conspicuous consumption that are currently explained by different families of models. In addition, our model delivers new testable predictions regarding the effect of network structure and income inequality on conspicuous consumption. 
 \\

\noindent \textbf{JEL:} \hspace{15mm} D63, D85, D91 \\ \noindent \textbf{Keywords:} \hspace{4mm} Social identity, social networks, interpersonal comparisons, norms,

\hspace{19mm}   discrimination,  centrality, income inequality, Keeping Up with the Joneses \\

\end{abstract}

\maketitle

\newpage
Economists have long acknowledged that some goods are consumed as much for the status they convey as for their actual usefulness, a phenomenon Veblen termed conspicuous consumption. For these \emph{status goods}, comparisons with others is understood to be a particularly important consideration in people's consumption decisions. To date, an impressive body of empirical work has documented a range of stylised facts regarding how people's consumption of status goods responds to the choices and incomes of others. 

First, an influential paper by \cite{charles2009conspicuous} documents that, holding their own permanent income fixed, people in the U.S. consume \emph{less} status goods (which they call `visible goods') when average income of those of the same race and who live in the same state is higher. \cite{khamis2012consumption} find similar behaviour for caste/religious groups in India (Fact 1). But second, people consume \emph{more} when average income of their geographic neighbours increases (\cite{kuhn2011effects}, Fact 2). And they consume more when the income of other social groups rises (\cite{bellet2018conspicuous}, Fact 3). Additionally, people's status good consumption is more sensitive to the incomes of others when they have higher income themselves (\cite{kaus2013conspicuous}, Fact 4).\footnote{A large part of the contribution in \cite{kaus2013conspicuous} is to extend the analysis in \cite{charles2009conspicuous} to a lower income country -- South Africa. He replicates the findings for Black South Africans (a large majority of the South African population), but not for White South Africans. But the finding regarding sensitivity is novel to \cite{kaus2013conspicuous}.}

Beyond these effects of changes in the level of income, \cite{charles2009conspicuous} documents that the impact of inequality can differ markedly across different groups (Fact 5). They find that higher dispersion of income within a person's reference group leads to lower status good consumption among Whites, but higher status good consumption among Blacks. Finally, there is widespread evidence that, conditional on income, some social groups spend more on status goods than others (\cite{charles2009conspicuous, kaus2013conspicuous, colson2022does}, Fact 6). 

These six facts are by no means an exhaustive description of how people consume status goods. Nevertheless, we contend that they capture many of the most important features of status good consumption that have been documented empirically. Our paper presents a simple model of social identity that explains these six stylised facts as a single phenomenon.  This stands in contrast to existing theory, which struggles to rationalize these behaviours as a coherent whole.

In our model, people belong to an identity group and derive utility from two sources: (1) directly from consuming status goods themselves,\footnote{Own consumption can bring both intrinsic utility and individual status.} and (2) from belonging to an identity with high average status good consumption, relative to that of other groups. Importantly, we assume there is some imperfect substitutability between these two sources. 

Belonging to an identity group also makes people feel pressure to conform to some reference point of status good consumption for their identity group (a `benchmark for normal'). We assume this reference point depends only on a person's neighbours in a social network, rather than their entire identity group. These two features -- deriving benefits from group status and the pressure to conform -- are both core components of social identity theory \citep{costa2015social}.

Our model sees agents gain utility directly from consuming the status good. This interpretation of status is the view of many sociologists \citep{coleman1994foundations} and of some economists \cite{Ocass2004exploring} and \citep{becker2005equilibrium}.\footnote{The origin of the desire for status, including evolutionary pressures, are explored in \cite{anderson2015desire}.} 
Within economics, it has proved especially popular in network economics (e.g. \cite{immorlica2017social, langtry2023keeping, ghiglino2010keeping}). But overall, a mainstream view in economics is that status comes only from income or wealth -- and that status goods are simply a costly signal \citep{bagwell1996veblen, glazer1996signaling, charles2009conspicuous}. 

Note that most of the literature uses status consumption and conspicuous consumption interchangeably (see discussion in \cite{Ocass2004exploring}). We adopt the view that conspicuous consumption, that is, consumption of a good in excess of its usefulness, is the result of the good providing status and prestige.

Agents also gain status from belonging to a group with high-status members. Here, this is one with high status good consumption. But group status is relative -- it depends on the group's average consumption \emph{relative to} that of the other group(s). This idea that status is relative can be traced back at least to Max Weber's (\citeyear{weber1918}) original work on status groups. It remains a firm feature of thinking about status -- regardless of whether one adopts the direct status view or the signalling view.\footnote{In \Cref{OA:abs status}, we consider the alternative case where status is absolute. That is, where it depends only on the consumption of an agent's own group.}

On the way to explaining the six stylised facts set out above, we provide a simple characterisation of behaviour. Equilibrium consumption of status goods depends critically on a generalised version of Bonacich centrality \citep{katz1953new, bonacich1987power} -- a well-known measure of network centrality. But here, both an agent's own centrality and the average centrality of each identity group are important. An agent's status good consumption increases linearly in her own centrality (a feature common to many networks models). But it also decreases in the average centrality of her own identity group and increases in the average centrality of other identity groups. In line with existing work, centrality is increasing in the strength of network connections, in an agent's own income and in the income of those who share her identity. 

The contrasting impacts of own centrality and group-average centrality is driven by the two sources of status in our model -- own consumption and group-average consumption -- and the fact that they are imperfect substitutes. Higher own centrality raises an agent's status good consumption directly. But higher centrality for others in her group raises \emph{their} consumption -- which raises group-average consumption, in turn raising the group's status. This higher group status reduces the marginal benefit of an agent's own consumption, inducing her to consume less. Because group status is inherently relative, higher centrality in other groups has the opposite effect. Higher centrality, and hence higher consumption, in other groups lowers the agent's group status. In turn, this pushes her to consume more status goods herself.

It is precisely these contrasting impacts that explain the first four stylised facts. When incomes for some agents in a given identity group rise, they consume more status goods -- the traditional direct response to an increase in own income. This then creates two spillover effects. First, it increases centralities for others in that identity group. 
This effect pushes status good consumption of those others in the identity group \emph{up}. 
Second, it increases average consumption of that identity group -- which raises group status for that identity group. In turn, this pushes others within the identity group to consume \emph{less} status goods -- because it reduces the marginal benefit of an agent's own status good consumption. 

Which of the two spillover effects wins out depends on how closely connected an agent $i$ is to those whose incomes increased. When those whose incomes increased are close neighbours (as in \cite{kuhn2011effects} and \cite{khamis2012consumption}) we should expect the first effect to win out -- so $i$ consumes more status goods (Fact 2). In contrast, when $i$  is not closely connected to those whose incomes increased (as in \cite{charles2009conspicuous}) we should expect the second effect to win out -- delivering the seemingly counter-intuitive result that she consumes fewer status goods (Fact 1). 

Notice that for other identity groups, only the second effect operates. There is no effect on own centrality in different identity groups. But higher average consumption in one identity group reduces the group status in all other groups. In turn, this pushes others in different identity groups to consume \emph{more} status goods (Fact 3). The explanation for Fact 4 is even more direct. Centrality acts multiplicatively on all other terms in our characterisation of equilibrium status good consumption. And centrality is increasing in own income. So higher own incomes makes status good consumption more sensitive to changes in other primitives -- including others' incomes.

We then study the effects of changing inequality. Here, we consider a setting with a number of separate communities and homogenous incomes within each community. The impact of redistributing income across different communities depends critically on the relative densities of network connections within the affected communities. This highlights the role that network connections play in mediating the impact of inequality. Here, we present some stylised evidence of a non-monotonic relationship between income and network densities. This relationship, coupled with well-documented differences in average incomes across different groups, delivers sharply differing effects of inequality on different identity groups (Fact 5). 

Finally, we extend the model to incorporate some exogenous component to group status -- which we call prestige. This captures the idea that many other factors beyond status good consumption may influence an identity group's status. One important determinant of a group's prestige may be discrimination it faces within a society \citep{oldmeadow2010social, fox2015denying}. In our model, groups with lower prestige consume more status goods (all else equal). This is because agents in these groups use status good consumption to compensate for lower prestige. Differences in discrimination/prestige directly induce differences in status good consumption across groups (Fact 6). And higher consumption of status goods would lead to lower consumption of other (unmodelled) goods -- including capital goods. So our model draws a novel link between discrimination and lower consumption of capital goods.

The rest of the paper is organised in the usual way. \Cref{sec:literature} briefly reviews related literature. \Cref{sec:model} presents the model. \Cref{sec:results} presents the main results. \Cref{sec:stylised facts} shows how our model explains the stylised facts. Then \Cref{sec:inequality} examines the impact of inequality. \Cref{sec:discrimination} extends the model to include an exogenous component of group status, and discusses the effects of discrimination. \Cref{sec:conclusion} concludes. All proofs are in \Cref{appendix:proofs}.

\section{Literature}\label{sec:literature}
Our paper sits within a large literature on interpersonal comparisons and conspicuous consumption.\footnote{It spans disciplines, including marketing, sociology, social psychology (see \cite{dubois2017social} and \cite{dubois2021psychology} for recent reviews), and economics (where it dates at least to \cite{veblen1899}, with other early contributions from  \cite{duesenberry1949income,festinger1954theory, frank1985demand, frank1985choosing} and \cite{van1985relativity}). \cite{bursztyn2017social} provide an interesting survey of the wider topic of social pressure.} 
Within economics, there are two broad views on why people engage in conspicuous consumption. The first assumes that people care about their income or wealth relative to others -- and conspicuous consumption is used as a \emph{signal} of high income/wealth. \cite{charles2009conspicuous} and \cite{khamis2012consumption} use this signalling approach to explain the first stylised fact (Fact 1) -- when others get richer, a given agent moves down the income ranking, and so has fewer people she needs to signal she is richer than (their model follows \cite{glazer1996signaling} very closely). However, with no explicit notion of `neighbours', these models struggle to explain the second stylised fact -- that people consume more when their neighbours' incomes rise. Additionally, because agents are consuming to differentiate themselves from people poorer than themselves, everyone's consumption decisions can be sensitive to incomes in the lowest part of the distribution. 
Our model can speak to Fact 1 without this sensitivity.

The second view assumes that people make social comparisons directly over consumption -- so people gain status from conspicuous consumption itself. The core idea is that people like high consumption relative to others, not just in absolute terms -- often called a ``Keeping up with the Joneses'' motive. Earlier work assumed people either care about their consumption relative to the population average, or to their rank within the population. Important contributions to these two approaches include \cite{abel1990asset, clark1996satisfaction, ljungqvist2000tax}, and \citep{frank1985choosing, frank1985demand, frank2001luxury, hopkins2004running} respectively.\footnote{More recent contributions include \cite{luttmer2005neighbors, heffetz2011test, frank2014expenditure, drechsel2014consumption, alvarez2016envy, jinkins2016conspicuous, bertrand2016trickle},\cite{de2020consumption} and \cite{hopkins2023cardinal}} 
Our paper is most closely related to a recent strand within this literature, started by \cite{ghiglino2010keeping}, that uses explicit networks to allow for very rich patterns of social comparisons (more recent contributions to this strand include \cite{immorlica2017social,ushchev2020social,langtry2023keeping} and \cite{bramoulle2022loss}). 
These `Keeping up with the Joneses' models successfully explain the second stylised fact. But they cannot explain the first. This is because they embed a desire to `Keep up with the Joneses', which makes agents' consumption choices strategic complements -- an increase in some people's income must (weakly) increase \emph{everyone's} consumption.

Our paper explains a set of so-far difficult to reconcile stylised facts in a single framework. It does so by building a new type of model -- one motivated by social identity, rather than signalling or `Keeping up with the Joneses' motives. 
In doing so, we also relate to a strand of literature that examines the link between inequality and conspicuous consumption. By and large, formal models provide ambiguous results on the nature of the link. In \cite{charles2009conspicuous}, a regressive redistribution of income (from a poorer agent to a richer one) increases total conspicuous consumption only if consumption is convex in income -- but this is not a consistent property of observed consumption data \citep{heffetz2011test}.\footnote{Other versions of the signalling model make similar predictions \citep{ireland2001optimal, heffetz2011test, moav2012saving, jinkins2016conspicuous}.}

The present model assumes cardinal utilities. \cite{bilancini2012redistribution} show that the relationship between conspicuous consumption and inequality can be sensitive to the choice of whether to use ordinal or cardinal utility. \cite{hopkins2023cardinal}, who extends \cite{hopkins2004running} to cardinal games, reaches a  similar conclusion and shows that cardinal games are more general and offer a better description of the evidence.  

As in this literature, we do not present an ironclad relationship between inequality and conspicuous consumption. But we do provide novel predictions regarding how the nature of the relationship depends on properties of the network. Notably, it allows us to explain our fifth stylised fact -- that higher dispersion in income within a person’s reference group leads to lower conspicuous consumption among Whites, but higher conspicuous consumption among Blacks. It does so without an appeal to differences in underlying preferences across groups, but instead relies on well-documented differences in average income.

Finally, our paper relates to the burgeoning literature on social identity theory within economics (see \cite{costa2015social} for a survey). The core idea in these models is that agents are part of an `identity' group, and gain some status from belonging to the group, but also bear some costs from deviating from some benchmark for that group (either real differences, or in the agents' own perception of matters). This setup was formalised in seminal contributions from \cite{akerlof2000economics} and \cite{shayo2009model}.\footnote{A number of recent papers extend or apply Shayo's approach; see for example, \cite{klor2010social, gennaioli2019identity, grossman2021identity, lindqvist2013identity, atkin2021we}, and \cite{shayo2020social}. \cite{ghiglino2024endogenous} consider how network connections affect the choice of identity (in contrast, our paper treats identity as fixed and considers how the network affects consumption choices). 
}

Our main contribution here is to show how models of social identity can help resolve a puzzle regarding conspicuous consumption behaviour -- something these models have not been used for to date. However, we abstract away from the choice of identity group -- which is often a key focus in this literature. This is both in the interest of tractability and because it is not needed to explain the stylised facts we focus on. From a more technical standpoint, aside from this, our utility formulation closely follows \cite{shayo2009model}. However, departing from most of the literature, but closer to \cite{atkin2021we}, we make the benchmark (often called   `prototype') as well as status both endogenous.

\section{Model}\label{sec:model}
\paragraph{Agents and endowments.} There are $J$ agents, with typical agent $j$. Each agent is endowed with real income $w_j \in [1,\infty)$ and a fixed \emph{identity}, from two possible identities: $\theta_j \in \{A,B\}$. Agents are embedded in a weighted and directed network represented by a $J \times J$ non-negative matrix $G$. We say that agent $k$ is a \emph{neighbour} of $j$ if $G_{jk} > 0$, and assume that $j$ is not her own neighbour (so $G_{jj} = 0$ for all $j$). Let $I(\theta) = \{ k : \theta_k = \theta \}$ be the set of agents who have identity $\theta$. 

\paragraph{Consumption game.} Agents simultaneously choose \emph{consumption} of a status good, $x_j \geq 0$. 

\paragraph{Reference point.} Each agent $j$ has a \emph{reference point}, $R_j = \alpha_j \sum_{k \in I(\theta_j)} g_{jk} x_k$, where $g_{jk} = \frac{G_{jk}}{\alpha_j}$ and $\alpha_j = \sum_k G_{jk}$; a weighted average of the status good consumption of her neighbours who share her identity. 

\paragraph{Preferences.} The agent $j$ receives utility benefits from her own consumption, $x_j$, and from the status of the identity group she identifies with\footnote{The assumption that status goods have both intrinsic and status benefits is also present in \cite{immorlica2017social}.}. The status term is given by the average consumption of her identity group \emph{relative to} the other identity group. For simplicity, we model this relative term as the ratio of average consumption of each group: $Y_\theta = \frac{\mathbf{x}_\theta}{\mathbf{x}_{-\theta}}$, where $\mathbf{x}_{\theta} = \frac{1}{|I(\theta)|} \sum_{k \in I(\theta)} x_k$ for $\theta \in \{A,B\}$. Importantly, we assume that own consumption and identity status are imperfect substitutes: the marginal benefit of each is decreasing in the level of the other.

Agent $j$ also experiences \emph{dissonance} costs when her consumption differs from her reference point, and these costs are convex. This is standard in identity theory. Note that $j$'s status benefits depend on everyone who shares her identity, but her reference point depends only on her neighbours who share her identity. 

Finally, there is also a convex cost of own consumption, with both absolute and marginal cost decreasing in own income. This cost function captures spending on other goods (outside of our model) in a reduced form way; implicitly assuming concave benefits from consuming the unmodelled goods (see discussion in \Cref{sec:model_discussion}).
We assume agents have the following preferences:
\begin{align}\label{eq:prefs}
    u_j = \underbrace{ S(x_j, Y_{\theta_j}) }_{\substack{\text{Intrinsic utility} \\ \text{and Status}}} - \underbrace{ \vphantom{\frac{\beta_j}{\beta_j}} \frac{1}{2} \beta (x_j - R_j)^2 }_{\text{Dissonance}} - \underbrace{ \frac{1}{2 w_j} x_j^2 }_{\text{Cost}}, 
\end{align}
with $\beta>0$ and $S'_x > 0, S'_Y > 0, S''_{xY} < 0$, for $x \in [0,X]$ and $Y \in \left[\frac{1}{Z},Z \right]$ for some $X > 0, Z \in (0, \infty)$. 
We provide a sufficient characterisation of $X$ and $Z$ in \Cref{sec:machinery}. In the interest of tractability, we  assume that the utility from own consumption is linear in $x_j$, so $S'_x$ is not a function of $x_j$.
A simple example of a function that meets these assumptions is $S(x,Y) = \eta \, x + Y - \gamma \, x \, Y$, where $\eta > 0$ and $\gamma < \min\{\frac{\alpha}{Z}, \frac{1}{X}\}$.

For tractability we also assume that agent $j$ ignores her own impact on $Y_j$ when choosing her consumption. When the number of agents is large, $j$'s impact on $Y_j$ is small, so this assumption has less and less `bite' as $J$ grows. An alternative approach would be to assume an infinite number of agents, so $j$'s impact on $Y_j$ is exactly zero. Such an approach would require extra notation without adding insight.

\paragraph{Information, Timing, Solution Concept.}
All of the primitives -- the network, incomes, functional forms, the value of beta -- are common knowledge. All agents move simultaneously. So we look for Nash Equilibrium of the game. 

\subsection{Machinery \& Assumptions}\label{sec:machinery}
The preferences in Eq (1) put our model into the class of linear-quadratic games on a network (see \cite{jackson2015games,bramoulle2016games} for reviews). In particular, the quadratic dissonance term is similar in spirit to \cite{ushchev2020social}. As is standard in this class of games, some variant on Bonacich centrality will play an important role in equilibrium behaviour.

Our variant on Bonacich centrality is intuitively much the same as the standard definition, but uses only within-identity links in the network, and these links are weighted by the importance of the dissonance function, $\beta$, and by agents' incomes, $w_j$. The reason that only the within-identity links play a role is because we assumed that in both the status and dissonance functions, an agent only cares about those who share her identity. Formally, define a network $H$ where $H_{jk} = \frac{\beta w_j}{\beta w_j + 1} \widehat{G}_{jk}$ for all $j,k$, with $\widehat{G}_{jk} = G_{jk} \cdot \mathbf{1}\{\theta_j = \theta_k\}$ for all $j,k$ (where $\mathbf{1}\{\cdot\}$ is the usual indicator function). It will be important to bear in mind that the network $H$ is a function of $\widehat{G}$, agents' incomes, and of $\beta$.

\begin{defn}[Modified Bonacich Centrality] \label{defn:bonacich centrality}
The centrality of agent $j$ is:
\begin{align}\label{eq:bonacich centrality}
C_j = \sum_{k} \left[(I - H)^{-1}\right]_{jk} \cdot \frac{w_k}{\beta w_k + 1}
\end{align}
\end{defn}

In the interest of brevity, we will just refer to this as `centrality'. It will also be convenient to define the average centrality of agents with a given identity. Formally, let $\mathcal{C}_\theta = \frac{1}{|I(\theta)|} \sum_{k \in I(\theta)} C_k$ be the average centrality of agents with identity $\theta$. 
And to ensure centrality is well-defined (and that an equilibrium exists) impose the eigenvalue standard to linear-quadratic games \citep{ballester2006s, bramoulle2016oxford}. 
We also need the marginal benefit of consumption $S'_x(\cdot)$ to be bounded.

\begin{ass}\label{ass:eigenvalue}
    \normalfont{(i)} $|\lambda_1| < 1$, where $\lambda_1$ is the largest eigenvalue modulus of the matrix $G$, 
    
    \noindent \normalfont{(ii)} $\rho < \infty$, where $\psi \equiv \max_j \{C_j\} < \infty$ and $\rho \equiv S'_x(\frac{1}{\psi (1+\beta)})$.
\end{ass}

Part (ii) ensures that consumption is finite for all agents. It also ensures that the ratio of average consumptions, $Y_\theta$ is also finite. Together, this allow us to place our assumptions on the function $S(x,Y)$ only for a finite range. Specifically, for $x \in [0,X]$, where $X\geq \psi \rho$ and $Y \in [\frac{1}{Z}, Z]$ where $Z \geq \psi (1+\beta)$.\footnote{This is important, because the assumptions we impose on $S(x,Y)$ cannot hold for an infinite range. Note that our characterisation of the range in which our assumptions $S'_x > 0, S'_Y > 0, S''_{xY} < 0$ must hold could be tightened if we imposed some additional, but still mild, assumptions. We discuss this in \Cref{OA:additional_proofs}}
We impose Assumption \ref{ass:eigenvalue} throughout the paper and will not restate it for each result.

\subsection{Discussion}\label{sec:model_discussion}
Before going further, it is helpful to briefly discuss three key features of the modelling. First, the theory of where status comes from. Second, that we have only one good. And third, the dissonance term: its motivation and relationship to existing work on social identity.

\paragraph{Status substitution.} Our model is one where consuming the status good brings utility directly, rather than being a signal for unobservable wealth. We discussed this in the introduction, and will not repeat ourselves here. Nevertheless, it is important to note that we need status good consumption to be visible. This is the same as in signalling models, but the reason is different. Here, agents must see their neighbours' consumption in order to form their reference point. 

Importantly, we assume (imperfect) substitutability between the direct benefits from own consumption and the status benefits derived from social identification. Such a trade-off between the self and the group identity is at the heart of social identity theory \citep{TajfelTurner1979}. There is extensive empirical evidence for substitution, in particular that individuals with low personal self-esteem or competence increase reliance on group identification. Seminal papers include \citep{AbramsHogg1988,Hogg2000,CrockerLuhtanen1990}) while recent empirical studies confirming this include \citep{CAMBON201824, GonzalesBacken2015}.


\paragraph{Single good.} To help present clean results, our model has only one good, and uses a quadratic cost term to capture the opportunity cost of foregoing other (unmodelled) goods. This cost term depends explicitly on income, as the level of income affects the marginal opportunity cost of other goods. This income term implicitly accounts for (exogenous) prices -- as these also affect the marginal opportunity cost. 

We can also view this setup as a second order approximation of a two-good model with exogenous prices and budget. Formally, there is a status good $x$, a composite (non-status) good $z$, and additively separable utility $u(x,z) = v(x) + s(z)$, where $v(\cdot)$ and $s(\cdot)$ are increasing, strictly increasing and concave. Let $p$ denote the relative price of the status good, $W$ denote income. Substituting in the (binding) budget constraint $px + z = W$ and then taking a second order Taylor approximation of $u(x,z)$ delivers utility as a quadratic function of $x$. Finally, notice that \Cref{eq:prefs} -- while written in a particular way to capture notions of status and dissonance -- is ultimately a quadratic function of $x$. 

Note that our model -- even viewed as an approximation to a two-good case -- treats prices as fixed. It takes a partial equilibrium view that ignores the supply side of the economy. We do this because general equilibrium effects in social comparisons model are known to be extremely complex \citep{ghiglino2010keeping}, and more importantly because they would not bring additional insights regarding the type of questions we address in this paper. 

\paragraph{Dissonance.} Our utility formulation in \Cref{eq:prefs} closely follows \cite{shayo2009model} -- in that its two core components are the status benefits and dissonance costs. Agents bear dissonance costs from departing from the `benchmark' for their identity group. Here, this `benchmark' is called a reference point.\footnote{Note that in the social identity literature, the `benchmark' is typically called a \emph{prototype} (including in \cite{shayo2009model, shayo2020social}), rather than a `reference point'. The `reference point' terminology is more common in the literature on social comparisons, and stems from that literature's more explicit link to behavioural economics.} 
\cite{shayo2009model} focuses on how people \emph{choose} their identity. So for tractability he treats the dissonance costs as depending only on the identity choice -- not on consumption. 

In contrast, we assume dissonance costs depend on consumption, and that the reference point (the `benchmark') is also determined by endogenous consumption choices.
Further, we allow the reference point to depend on an agent's exact position in a network. This follows the now-common notion that people compare themselves to those they actually interact with (e.g. \cite{ghiglino2010keeping, immorlica2017social, bramoulle2022loss}). The trade-off is that we treat identity as exogenously fixed in order to keep the model tractable and ensure clean results.

A limitation of our reference point is that it only depends on neighbours with the same identity. Intuitively, this is a highly stylised way of capturing the view that an agent's benchmark for normal -- that they want to conform to -- is set predominantly by those in their own identity group. More prosaically, it is important for tractability.

Nevertheless, we follow the standard view in social identity models that people want to conform: they dislike being either above or below the reference point. This is motivated by significant evidence from social psychology, especially regarding descriptive norms \citep{cialdini1990focus, cialdini2004social}.
But it stands in contrast to the standard assumption in the social comparisons literature that people want to get ahead: they \emph{dislike} being below the reference point. Ultimately, changing a social comparisons style desire to get ahead would not alter the key insights of our model. We show this in \Cref{OA:KUWJ}. 


Before showing how our model rationalises the stylised facts set out in the introduction, we first analyse the model on its own terms. We characterise equilibrium behaviour, and show how it depends critically on an agent's own centrality, and the average centralities of the identity groups. Then we examine how the primitives of the model impact centralities.

\section{Results}\label{sec:results}
We start by solving the model: that is, providing an exact characterisation of how agents behave in equilibrium.

\begin{prop}\label{prop:1}
There exists a unique equilibrium. For all $j$, equilibrium consumption is:
    \begin{align}\label{eq:closed form solution}
    x_j^* = S'_x(Y_{\theta_j}^*) \cdot C_j, \text{  where  } Y_{\theta_j}^* \text{ solves } Y_{\theta_j} = \frac{S'_x(Y_{\theta_j}) \mathcal{C}_{\theta_j}}{S'_x(\frac{1}{Y_{\theta_j}}) \mathcal{C}_{-\theta_j}}
\end{align}
\end{prop}

Centrality is the key determinant of equilibrium consumption. As in other linear-quadratic games, higher own centrality increases an agent's consumption. Here, this is because agents with high centrality are more connected to others who themselves have high consumption -- giving them a high reference point. In turn this pushes them to consume more in an effort to conform to the `benchmark for normal' set by the reference point. 

There is also high status from belonging to a group where the other members have high consumption (which is driven by them having high centrality). But status from the group is imperfectly substitutable with the status from own consumption. So when an agent belongs to a high status group, she has a lower marginal benefit of own consumption -- and so she consumes less. The substitutability between the two sources of status is key. When an agent belongs to a high status group, she feels less need to create status for herself through her own consumption.

But group status is relative. High consumption in the \emph{other} identity group -- driven by high centrality in that group --  reduces the group status. By identical logic, this raises consumption. The relative notion of status drives cross-group effects: what happens in one identity group affects behaviour in the other -- even though we assumed away any direct benchmarking against one another through the reference point.
The following result formalises this discussion.

\begin{rem}\label{rem:1}
$x_j^*$ is strictly increasing in $C_j$ and in $\mathcal{C}_{- \theta_j}$, and strictly decreasing in $\mathcal{C}_{\theta_j}$, for all $j$.
\end{rem}

This links equilibrium behaviour to centrality. However, note that the comparative static on $C_j$ is just a thought experiment. Changes to primitives of the model will change the centrality of multiple agents, so changing $C_j$ while holding all else constant is not possible in practice. The next natural question is to ask what drives centrality. Unsurprisingly, high centrality in our model comes from a combination of strong connections to neighbours (especially to those who are themselves well connected), connections to people with high income, and having high income oneself.

\begin{rem}\label{rem:2}
$C_j$ is increasing in $w_\ell$, and in $G_{\ell m}$, for all $j,\ell, m$. These results are strict if and only if $j$ and $\ell$ share an identity and there is a walk from $\ell$ to $j$.
\end{rem}

More and stronger links raise an agent's reference point -- the benchmark she tries to conform to. So she raises her consumption in response. In turn, this then pushes up her neighbours' reference points, which raises their consumption, and so on. The effect ripples out through the network. Similarly, higher income makes consumption cheaper for an agent (importantly, at the margin) -- so she chooses to consume more directly. This then raises her neighbours' reference points, and so ripples out through the network in much the same way as with stronger links. In both cases, the ripple effect is contained within the affected identity group -- other identity groups are unaffected.

The 'containment' of the ripple effect is driven by the fact that agents only consider those who share their identity when setting their reference point. It captures the idea that I only feel pressure to conform to `people like me'. Those with a different identity are seen as different, and so not conforming with them does not present any issues.

So far, \Cref{rem:1} links centralities to consumption decisions, and \Cref{rem:2} links incomes and network structure to centralities. But the overall link between incomes and consumption can be ambiguous. When incomes within an agent's own identity group change, there are two effects. First, an increase in someone else's income increases her own centrality. And second, it increases the average centrality of her group. These push in different directions (\Cref{rem:1}). In contrast, an increase in income among agents in the \emph{other} identity group has a clear effect. It must increase $j$'s consumption. This is because it reduces her group status without any impact on her own centrality. The following result states this formally. 

\begin{prop}\label{prop:slutsky}
(i) if $\theta_j \neq \theta_k$, then $x_j^*$ is strictly increasing in $w_k$.
(ii) if $\theta_j = \theta_k$, then
\begin{align}\label{eq:slutsky}
    \frac{d x_j^*}{d w_k} = 
    \underbrace{ S'_x(Y^*_{\theta_j}) \cdot \frac{d C_j}{d w_k} \vphantom{\Bigg|_T}}_{\text{own centrality effect} } +
    \underbrace{ C_j \cdot \frac{d S'_x(Y^*_{\theta_j})}{d Y^*_{\theta_j}} \cdot \frac{d Y^*_{\theta_j}}{d \mathcal{C}_\theta}  \cdot \frac{d \mathcal{C}_\theta}{d w_k} }_{\text{ average centrality effect} }
\end{align}
\end{prop}

This shows that spillovers across identity groups are clear -- higher income in one identity group always drives higher consumption in the other. And for changes within an identity group, it decomposes the two contrasting forces. This helps us to see when the own centrality effect will dominate -- pushing consumption up -- and when the average centrality effect will dominate -- pushing consumption down. But it is difficult to say in general exactly what network structure will lead one effect versus the other to dominate. 

Well-known intuitions regarding Bonacich centrality style measures suggest that a change to incomes (or other primitives) has a larger impact on agents closer in the network to the change. So the key idea is that for agents located `close' to the change, the effect on own centrality will dominate the effect on average centrality. And that the reverse applies for agents located `far from' the change.

When we turn to explaining the stylised facts in the next sections, we will focus on a stark benchmark setting where agents are split up into disconnected `communities'. This will allow us to separate each channel cleanly. And it gives us a precise, if stylised, way of thinking about `near' and `far' in the network. A pair of agents who share a community are `near' to one another in the network, and a pair in different communities are `far apart'.

\section{The effects of changes in income}\label{sec:stylised facts}
In the introduction, we set out a series of stylised facts about how changes in incomes feed through into how people choose to consume status goods. And we noted that existing models struggle to rationalise them all as a coherent whole. The aim of this section is to show how our model fits the first four of these six facts simultaneously. It will use the results from \Cref{sec:results}. The next two sections will show how our model fits the fifth and sixth stylised facts respectively, as well as building some additional results related to them.



As noted at the end of \Cref{sec:results}, we consider a set of disconnected `communities'. This allows us to isolate the channels identified in \Cref{prop:slutsky}.

\begin{defn}[Communities network]\label{defn:communities_network}
The network $\widehat{G}$ consists of $N$ strongly connected components,\footnote{A \emph{walk} in $\widehat{G}$ from $j$ to $k$ is a sequence of agents $j, j',...,k',k$ such that $\widehat{G}_{j'k'}>0$ for all $j',k'$ in the sequence. A \emph{strongly connected component} is a set of agents such that for all agents $j,k$ in the set, there exists a walk from $j$ to $k$.} 
with typical component $n$, and where $N$ is sufficiently large. 
All components contain the same number of agents, and there are an equal number of components of each identity.  Call each component a \emph{community}.
\end{defn}

Changes in community $n$ will only affect centralities within that community, and will not spill over into other communities. This means a change in one community affects other communities only through the impact on average centrality. Further, assuming that there are many communities guarantees that a change in community $n$ has a larger impact on the centrality of the agents in that community than it does on the average centrality of all agents. In summary, following a change in community $n$, the `own centrality' channel dominates within community $n$, and the `average centrality' channel dominates in all other communities.

Before stating our result, it is helpful to quickly recap the stylised facts that relate to changes in income. This will help make the link between them and the behaviour predicted by our model easier to see.

\begin{enumerate}
    \item In the US [resp. India], people consume \emph{less} status goods when the average income of others of the same race in their state [others of the same caste/religious group in their district] is higher \citep{charles2009conspicuous, khamis2012consumption}.
    \item In the Netherlands, people consume \emph{more} when the average income of their geographic neighbours increases \citep{kuhn2011effects}.
    \item In India, low caste households' consumption of status goods is \emph{higher} when Upper Caste households are \emph{richer} \citep{bellet2018conspicuous}.
    \item In South Africa, people's consumption of status goods is \emph{more sensitive} to changes in income of their reference group when they are richer \citep{kaus2013conspicuous}.
\end{enumerate}

With these facts in mind, we can now state our result.

\begin{prop}\label{prop:2}
    Consider a communities network. Suppose income increases for some agents with identity $\theta$ in community $n$. \\
    (i) \ \ Consumption decreases for all agents with identity $\theta$ in communities \emph{other than} $n$, \\
    (ii) \ Consumption increases for all agents with identity $\theta$ in community $n$. \\
    (iii) Consumption increases for all agents with identity $\theta' \neq \theta$. \\ 
    (iv) \ For all agents $j$ in communities \emph{other than} $n$, the marginal impact on $j$'s consumption is increasing in $j$'s own income.
\end{prop}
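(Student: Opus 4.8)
The plan is to reduce everything to the closed form in \Cref{prop:1}, $x_j^* = \frac{\alpha^2 - \gamma^2}{\alpha + \gamma Z} C_j$, combined with the centrality comparative statics of \Cref{rem:2} and the block structure of the communities network. First I would record the structural fact that drives the whole argument: since $\widehat{G}$ keeps only within-identity links, each strongly connected component is mono-identity, so community $n$ carries a single identity $\theta$. By \Cref{rem:2}, raising $w_k$ for agents $k$ in community $n$ strictly raises $C_j$ exactly for the $j$ that share $k$'s identity and are reachable from $k$ by a walk — i.e. precisely the agents in community $n$ — and leaves every other centrality unchanged. Hence $\mathcal{C}_\theta$ strictly increases while $\mathcal{C}_{-\theta} = \mathcal{C}_{\theta'}$ is unchanged. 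Because distinct components are disconnected, $C_j$ for $j \notin n$ is independent of $N$ and of community $n$'s incomes; this independence is what I lean on repeatedly.

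Parts (i) and (iii) then follow immediately from \Cref{rem:1}. For $j$ of identity $\theta$ in a community other than $n$, its own $C_j$ and $\mathcal{C}_{-\theta_j} = \mathcal{C}_{\theta'}$ are fixed while $\mathcal{C}_{\theta_j} = \mathcal{C}_\theta$ rises; since $x_j^*$ is strictly decreasing in $\mathcal{C}_{\theta_j}$, consumption falls, giving (i). For $j$ of identity $\theta' \neq \theta$ (necessarily outside $n$), $C_j$ and $\mathcal{C}_{\theta_j} = \mathcal{C}_{\theta'}$ are fixed while $\mathcal{C}_{-\theta_j} = \mathcal{C}_\theta$ rises; since $x_j^*$ is strictly increasing in $\mathcal{C}_{-\theta_j}$, consumption rises, giving (iii).

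Part (ii) is the substantive step, and the only place the ``$N$ large'' hypothesis bites, because for $j$ inside community $n$ the own-centrality and average-centrality channels push in opposite directions. Here I would invoke \Cref{prop:slutsky}(ii): $\frac{dx_j^*}{dw_k} = \frac{1}{\alpha + \gamma Z}\big[(\alpha^2 - \gamma^2)\frac{dC_j}{dw_k} - \frac{\gamma}{\mathcal{C}_{-\theta}}\frac{d\mathcal{C}_{\theta_j}}{dw_k} x_j^*\big]$. The first bracketed term is strictly positive by \Cref{rem:2}, and depends only on the block of community $n$, so it is bounded away from $0$ uniformly in $N$. For the second term I would write $\frac{d\mathcal{C}_\theta}{dw_k} = \frac{1}{|I(\theta)|}\sum_{m \in n}\frac{dC_m}{dw_k}$, where only the agents of community $n$ contribute: the numerator is a fixed, $N$-independent sum, while $|I(\theta)| = \Theta(N)$. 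After checking that the remaining factors $\gamma/\mathcal{C}_{-\theta}$, $x_j^*$ and $1/(\alpha + \gamma Z)$ stay bounded — using $\alpha + \gamma Z > \alpha$, boundedness of $Z$, and $\mathcal{C}_{-\theta}$ bounded below since each centrality exceeds its strictly positive own-income diagonal term — the subtracted term is $O(1/N)$. Thus for $N$ beyond some finite $\bar N$ the bracket is positive and consumption rises. Establishing these uniform bounds, and thereby pinning down the threshold $\bar N$ of \Cref{defn:communities_network}, is the main obstacle.

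Finally, for part (iv), take $j$ outside community $n$, and set $Z_j = \mathcal{C}_{\theta_j}/\mathcal{C}_{-\theta_j}$. Since $C_j$ does not depend on $w_k$, differentiating the closed form gives $\frac{dx_j^*}{dw_k} = (\alpha^2 - \gamma^2)C_j\cdot\frac{-\gamma}{(\alpha + \gamma Z_j)^2}\frac{dZ_j}{dw_k}$, where the entire factor multiplying $C_j$ is an aggregate object shared by all same-identity agents in other communities (and, up to $O(1/N)$ terms, independent of which such $j$ is chosen). Hence the marginal impact is proportional to $C_j$, and by \Cref{rem:2} $C_j$ is increasing in $j$'s own income $w_j$. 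I would conclude that the magnitude of the marginal impact increases in own income — the content of (iv) and of Fact 4 — flagging that the signed impact is increasing for $\theta'$-agents and decreasing (larger in magnitude) for $\theta$-agents, so that ``increasing'' is to be read as increasing sensitivity.
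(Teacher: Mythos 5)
Your proposal is correct and takes essentially the same route as the paper: parts (i) and (iii) follow from \Cref{rem:1} together with the containment of centrality changes in \Cref{rem:2}, part (ii) uses the decomposition in \Cref{prop:slutsky} with the scaling $\frac{d\mathcal{C}_{\theta}}{dw_k} = O(1/N)$ that the paper itself uses to characterise $\bar{N}$ in the appendix, and part (iv) rests on the same observation that the marginal impact is $C_j$ times an aggregate factor, with $C_j$ increasing in $w_j$. Your added care about uniform bounds in $N$ for part (ii), and your explicit magnitude/sensitivity reading of ``increasing'' in part (iv), only make precise what the paper leaves implicit.
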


The mapping from \Cref{prop:2} into the stylised facts is fairly direct. An increase in incomes in other communities corresponds to higher average income of others of the same race in the same state, as in \cite{charles2009conspicuous}.\footnote{We use the term `income' in our model for convenience. Where appropriate, this should be taken to mean the `permanent income' examined by \cite{charles2009conspicuous} in their empirical analysis.} 
An increase in incomes within an agent's own community corresponds to raising her neighbours' incomes, as in \cite{kuhn2011effects}. Note that \Cref{prop:2}(ii) applies only to agents who share an identity, but \cite{kuhn2011effects} does not consider identity in their analysis. Nevertheless, strong homophily and sorting along geographic lines are well-documented phenomena within economics, sociology \citep{mcpherson2001birds,schelling2006micromotives} and other disciplines. So it is very likely that at least some of an agent's geographic neighbours do in fact share her identity. 

The third stylised fact is even more direct: an `increasing marginal impact' is simply a formalisation of the `greater sensitivity' found by \cite{kaus2013conspicuous}. Finally, an increase in incomes of one identity group (Upper Caste households) leads to an increase in visible/status goods consumption of other identity groups (low caste households). This is the fourth stylised fact.

From a more technical standpoint, \Cref{prop:2} follows directly from \Cref{prop:1} when focusing attention on the communities network. 
%
Importantly, \Cref{prop:1} illuminates the mechanism that drives this behaviour. An increase in incomes for some agents with identity $\theta$ in community $n$ raises their consumption directly. This then raises the reference points of their neighbours, raising the benchmark these neighbours are trying to conform to -- and hence raising their consumption (Fact 2). This effect ripples out through the whole community (see discussion following \Cref{rem:2}). This then raises the average consumption of agents with identity $\theta$. So agents with identity $\theta$ gain more status from belonging to that identity group. So those with identity $\theta$ in other communities feel less need to seek status through their own consumption of status goods (Fact 1). Conversely, agents in other identity groups now have lower status -- because status is a relative concept -- and so feel a greater need to consume status goods (Fact 4).

But for agents in community $n$, the changes are concentrated `close' to them in the network. So the pressure to conform to a higher benchmark is strong relative to the gains from their whole identity group having higher status. It is this pressure from a higher benchmark that pushes up consumption -- and this is the force that dominates inside community $n$ (Fact 1).
Finally, while Propositions \ref{prop:1} and \ref{prop:2} are clear on how the third stylised fact is captured mechanically, the headline intuition is related to the way income affects the costs of consumption. In particular, agents with higher income have a lower marginal cost of consumption, and this marginal cost rises more slowly. When the marginal status benefits of consumption rise, as happens when $Y_{\theta_j}$ decreases, it should not come as a surprise that agents increase their consumption in response. But by how much they adjust their consumption depends on how quickly the marginal cost rises as consumption rises. It is precisely higher income agents whose marginal costs rise more slowly, and so who make greater adjustments.

\section{The effects of changes in inequality}\label{sec:inequality}
We now turn to our fifth stylised fact -- a puzzling empirical regularity documented by Charles et al. (2009) regarding the effects of income inequality.
Specifically, they find that higher income inequality among Whites in a given U.S. state correlates with lower consumption of visible (status) goods by Whites, even when controlling for individual income. But for Blacks, the same rise in inequality leads to \emph{higher} consumption of visible (status) goods, again holding individual income constant.

In this section, we present an explanation for this stylised fact. There are three steps. First, we present a formal result linking the qualitative impact of changes in inequality to the density of social links in communities. 
Second, we argue that there is convincing evidence that density is linked to income and has a U-shaped relationship with income. Finally, we show that coupling this U-shaped relationship with our formal result can explain the stylised fact discovered by Charles et al. (2009). 

\paragraph{Inequality and the role of density.} To help clarify things, we assume that all agents in a given community $n$ share the same income, expressed formally as \( w_i = w_n \) for all \( i \in n \). With this assumption in place, we define the density of a community, which will be a crucial metric for studying the effects of changes in between-community inequality.

\begin{defn}[Community Density]\label{defn:island density}
The density of community \( n \) is:
\begin{align*}
    D_n = \sum_{j,k \in n} \left[ (I - \widehat{G})^{-1} \right]_{jk}
\end{align*}
\end{defn}

Readers familiar with Bonacich centrality will recognize that our definition of density sums the centralities of all agents in the community -- using the standard definition of Bonacich centrality that focuses solely on network connections, rather than our variant that incorporates income. While this definition diverges from the conventional notion of network density, it captures an important aspect of the overall connectivity within a community.\footnote{Conventionally, density is simply the sum of all links (normalised by the number of agents). Here, this would be: $\frac{1}{|n|} \sum_{j,k \in n} \widehat{G}_{jk}$. Both this standard measure and our measure are strictly increasing in the weight of each link.}

We focus on a natural form of redistribution -- a transfer of income from one community to another while maintaining constant average income across society. Such a transfer decreases income inequality if the receiving community is poorer than the donor community and increases inequality if the recipient is wealthier. 

\begin{prop}[Inequality]\label{prop:inequality 1}
Consider two communities, \( n \) and \( n' \), both sharing the same identity \( \theta \). When income is transferred from \( n \) to \( n' \) without changing their income rankings, we observe:
\begin{itemize}
    \item[(i)] Consumption decreases in community \( n \) and increases in community \( n' \).
    \item[(ii)] Consumption in all \emph{other} communities with identity $\theta$ increases if and only if \( D_{n'} < D_n \).
\end{itemize}
\end{prop}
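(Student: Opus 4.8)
The plan is to work directly from the closed-form solution in \Cref{prop:1}, namely $x_j^* = \frac{\alpha^2 - \gamma^2}{\alpha + \gamma Z} C_j$ with $Z = \mathcal{C}_\theta / \mathcal{C}_{-\theta}$, and track how each ingredient responds to the transfer. Since the transfer moves income from $n$ to $n'$ with both communities sharing identity $\theta$, and under the communities-network structure (\Cref{defn:communities_network}) changes in a community affect only the centralities inside it, I would first record the three channels through which a marginal transfer $dw_n = -dw_{n'} < 0$ acts: (a) it changes the own-centrality $C_j$ of agents in $n$ and $n'$ directly (and nobody else's, by the containment in \Cref{rem:2}); (b) it changes the group-average centrality $\mathcal{C}_\theta$, and hence $Z$, for all identity-$\theta$ agents; and (c) it leaves $\mathcal{C}_{-\theta}$ untouched. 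The key observation I would establish is that for agents in a \emph{third} community with identity $\theta$, channel (a) is absent, so their consumption moves purely through $Z$: since $x_j^*$ is strictly decreasing in $\mathcal{C}_\theta$ (\Cref{rem:1}), such agents raise consumption precisely when $\mathcal{C}_\theta$ falls, i.e.\ when the transfer lowers the total identity-$\theta$ centrality.

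For part (i), I would argue that within community $n$ channel (a) dominates: losing income directly lowers every $C_j$ in $n$ (by \Cref{rem:2}, strictly, since all agents in a strongly connected component are mutually reachable), and this within-community effect outweighs the countervailing $Z$-effect because of the ``$N$ sufficiently large'' hypothesis, which is exactly the device (flagged in \Cref{defn:communities_network} and its footnote) guaranteeing that a change concentrated in one community moves own-centrality more than the society-wide average. This is the same own-centrality-dominates logic already used to prove \Cref{prop:2}(i)--(ii), so I would invoke that machinery rather than re-deriving it. The symmetric argument gives the increase in $n'$.

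For part (ii), the heart of the matter, I would reduce the claim to the sign of $d\mathcal{C}_\theta$ under the transfer, since (as noted) third-community consumption rises iff $\mathcal{C}_\theta$ falls. Because $\mathcal{C}_\theta$ is (up to the fixed factor $1/|I(\theta)|$) the sum of all identity-$\theta$ centralities, and only the centralities inside $n$ and $n'$ change, I need the sign of $\sum_{j\in n} dC_j + \sum_{j\in n'} dC_j$ under $dw_n = -dw_{n'}$. Here I would exploit the within-community homogeneous-income assumption $w_i = w_n$ to show that the aggregate centrality mass a community holds, and its sensitivity to a marginal income change, is governed by the community density $D_n = \sum_{j,k\in n}[(I-\widehat G)^{-1}]_{jk}$ of \Cref{defn:island density}: intuitively, with common income the generalised-Bonacich sum over a community factorises into the income-dependent scalar $\frac{w_n}{\beta w_n + 1}$-type weights acting on the purely topological inverse $(I-\widehat G)^{-1}$, whose within-community row-sums aggregate to $D_n$. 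The total centrality gained by $n'$ per unit income is then proportional to $D_{n'}$ and the total lost by $n$ proportional to $D_n$, so the net change in $\mathcal{C}_\theta$ has the sign of $D_{n'} - D_n$; hence $\mathcal{C}_\theta$ falls (and third-community consumption rises) exactly when $D_{n'} < D_n$.

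The main obstacle I anticipate is making the density factorisation in part (ii) rigorous. The map from income to centrality is nonlinear, because $H_{jk} = \frac{\beta w_j}{\beta w_j + 1}\widehat G_{jk}$ depends on $w$ inside the matrix inverse, so $\sum_{j\in n} C_j$ is \emph{not} simply $\frac{w_n}{\beta w_n + 1} D_n$ and one cannot naively read off the derivative. The careful step is to differentiate $(I-H)^{-1}$ with respect to $w_n$ using the identity $d(I-H)^{-1} = (I-H)^{-1}(dH)(I-H)^{-1}$, confine $dH$ to the block of rows for community $n$ (and $n'$), and then show that after summing over all agents the leading term in the comparison collapses to the purely topological density $D_n$ versus $D_{n'}$ with a common positive income-dependent prefactor. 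I would need the two communities to be comparable in size and income-ranking (both guaranteed by the hypotheses) so that this common prefactor is identical across $n$ and $n'$ and the comparison is driven solely by $D_{n'}$ versus $D_n$; establishing that the cross-terms and higher-order corrections do not flip the sign is where the real work lies, and I would lean on \Cref{ass:eigenvalue} to ensure the Neumann expansion of $(I-H)^{-1}$ converges so these manipulations are valid.
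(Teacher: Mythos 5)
Your overall architecture matches the paper's: part (i) via the large-$N$ ``own centrality dominates average centrality'' device already used for \Cref{prop:2}, and part (ii) reduced, via \Cref{rem:1}, to the sign of the change in $\mathcal{C}_\theta$. The genuine gap is at the step you yourself flag as ``where the real work lies'': you never establish that the transfer moves $\mathcal{C}_\theta$ with the sign of $D_{n'} - D_n$, and you only promise a perturbative resolvent computation whose decisive sign-collapse is left open. The paper closes exactly this step with a dedicated lemma (\Cref{rem:centrality simple}): under the homogeneous-income communities assumption, $C_j = w_n \sum_k [(I - \widehat{G})^{-1}]_{jk}$ for all $j \in n$ \emph{exactly}, so that $\mathcal{C}_\theta = \frac{2}{J}\sum_{n \in I(\theta)} w_n D_n$ is exactly linear in the income profile with coefficients $D_n$. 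A transfer of $\epsilon$ from $n$ to $n'$ then changes $\mathcal{C}_\theta$ by exactly $\frac{2\epsilon}{J}(D_{n'} - D_n)$, and the iff in part (ii) is immediate -- globally, not merely to first order. Your proposal contains no such identity and no substitute for it.

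Moreover, the repair strategy you sketch would not succeed as described. Writing $a(w) = \frac{\beta w}{\beta w + 1}$, $b(w) = \frac{w}{\beta w + 1}$ and $f_n(a) = \mathbf{1}^\top (I - a \widehat{G}_n)^{-1} \mathbf{1}$, the total centrality of community $n$ is $S_n(w_n) = b(w_n) f_n(a(w_n))$, and the marginal effect of the transfer on $\mathcal{C}_\theta$ is proportional to $S'_{n'}(w_{n'}) - S'_n(w_n)$, where $S'_n(w) = \frac{f_n(a(w))}{(\beta w + 1)^2} + \frac{\beta w}{(\beta w + 1)^3} f'_n(a(w))$. This depends on the topology through $f_n$ and $f'_n$ evaluated at the decay parameter $a(w)$, not only through $D_n = f_n(1)$, and the income-dependent prefactors are evaluated at \emph{different} incomes $w_n \neq w_{n'}$: preserving the income ranking does not equalize incomes, so your hoped-for ``common positive prefactor identical across $n$ and $n'$'' does not exist, and no control of cross-terms will collapse the comparison to $D_{n'}$ versus $D_n$ without an exact factorisation. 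I note in fairness that your skepticism about naive factorisation targets a real pressure point: the paper's own derivation of \Cref{rem:centrality simple} separates the scalar $\bigl(\tfrac{\beta w_n}{\beta w_n + 1}\bigr)^t$ from $\widehat{G}_n^t$ inside the Neumann series as though $\sum_t a^t \widehat{G}^t = \bigl(\sum_t a^t\bigr)\bigl(\sum_t \widehat{G}^t\bigr)$, which is not a valid step (test it with $\widehat{G} = 0$). But within the paper's argument that lemma is the load-bearing component of part (ii); since your proposal neither proves it nor replaces it, the stated iff remains unestablished.
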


The first part of this proposition is intuitive: when income rises, consumption increases, and when it falls, consumption decreases. However, the effect of this transfer on the sum of status good consumption in communities $n$ and $n'$  depends on their relative densities. If the receiving community ($n'$) has a denser network, the increase in consumption there outweighs the fall in the donor community ($n$). This is because a denser network makes the receiving community's consumption more sensitive to a change in income. This enhances the status of the identity group. 

The other communities with identity $\theta$ -- those not party to the transfer -- respond to this enhanced group status by reducing their consumption.\footnote{Importantly, this reduction cannot fully offset the initial rise in consumption among the two affected communities. So we also have that \emph{total} consumption (across all communities) rises if and only if \( D_{n'} > D_n \). \label{ft:inequality_total_effect}} 
In other words, with higher status from their group, agents feel less need to consume status goods themselves. This is due to the imperfect substitutability between own status good consumption and group status.

Notice that this depends only on the \emph{relative} incomes of the two communities who are party to the transfer -- not on where in the overall income distribution these two communities sit. So it applies equally to changes concentrated at the top of the income distribution (e.g. the very rich pulling away from the merely quite rich) as it does to changes between the rich and the poor.

\paragraph{The link between density and income.} The connection between this proposition and the observed stylised facts hinges on an assumption that density (a la Definition \ref{defn:island density}) follows a U-shaped pattern relative to income. That is, both low- and high-income communities have the highest densities and so experience the strongest pressures to conform. We argue that there is convincing evidence in support of this pattern. 

First, \cite{bailey2020social} finds as income increases, people have more geographically dispersed social networks -- meaning a lower proportion of their friends reside nearby. This relationship is present at all income levels: the higher someone's income, the more dispersed their friends are.  Under the assumption that social comparisons are stronger when friends are geographically closer -- as their consumption decisions are likely more closely observed -- this suggests that higher income individuals may experience weaker social comparisons. This results in lower centrality for higher-income communities.

Second, \cite{chetty2022social_1,chetty2022social_2} show significant homophily based on socioeconomic status, with individuals tending to befriend others with similar income levels, particularly at the upper end of the income distribution.\footnote{Homophily is the tendency for people to be disproportionately connected to those similar to themselves, and is one of the most ubiquitous features of social networks \citep{jackson2008, pin2016networks}.} We then assume that people are more likely to share an identity with those of a more similar socioeconomic status. A key feature of Chetty et al.'s finding is that the level of homophily rises sharply at the very top of the income distribution -- around the 90th percentile and up. This suggests that those at the very top of the income distribution may experience stronger social comparisons. 

Importantly, while geographic dispersion seems to affect all income levels, the increasing homophily based on income appears primarily among those with higher incomes. So the geographic dispersion result creates a negative relationship between incomes and community density (which is driven entirely by the strengths of social comparisons) for most of the income distribution. Then the strongly increasing homophily at the top of the income distribution may create a reversal, and a positive relationship in this range. Together, these dynamics produce a non-monotonic relationship. \Cref{fig:1} provides a stylised illustration.

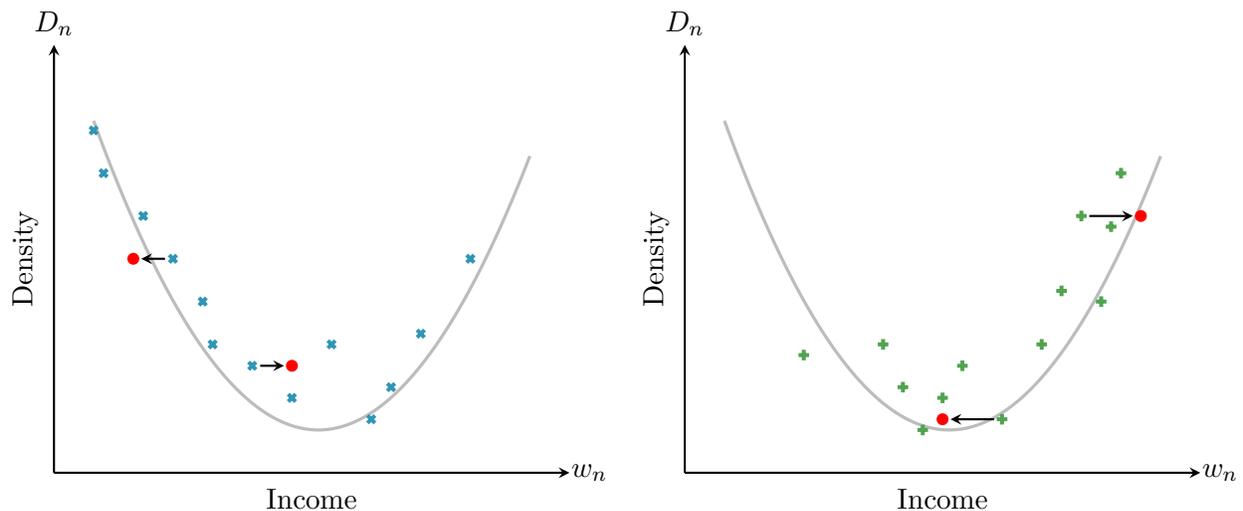
\begin{figure}[ht]
    \hspace{-1mm}
    \begin{subfigure}{0.45\textwidth}
        \centering
        \begin{tikzpicture}
\begin{axis}[
  axis lines=left,
  axis line style={thick},
  xlabel={Income},
  ylabel={Density},
  xtick={\empty},
  ytick={\empty},
  ymin=0,
  ymax=10, 
  xmin=0,
  xmax=13, 
  clip=false,
  domain=0:13, 
  samples=120 
]

\addplot+[no marks, very thick, gray, domain=1:12] {0.4*(0.75*x-5)^(2)+1};

\addplot+[mark=x, only marks, mark size=2pt, color=blue, mark options={line width=1.5pt}] coordinates {(1,8) (1.25,7) (2.25,6) (3,5) (2.25,4) (4,3) (5,2.5) (6,1.75) (7,3) (8,1.25) (8.5, 2) (9.25,3.25) (10.5, 5)};

\addplot+[mark=*, only marks, mark size=1.5pt, color=red, mark options={line width=1.5pt}] coordinates {(2,5) (6,2.5)}; 

\draw[thick, -stealth](2.8,5) -- (2.2,5);
\draw[thick, -stealth](5.2,2.5) -- (5.8,2.5);
\draw[very thick, -stealth, red](2.9,4.75) -- (5.1,2.75);

\node at (axis cs: 13.5,0) {$w_n$};
\node at (axis cs: 0,10.5) {$D_n$};

\end{axis}
\end{tikzpicture}
    \end{subfigure}
    \hspace{7mm}
    \begin{subfigure}{0.45\textwidth}
        \centering
        \begin{tikzpicture}
\begin{axis}[
  axis lines=left,
  axis line style={thick},
  xlabel={Income},
  ylabel={Density},
  xtick={\empty},
  ytick={\empty},
  ymin=0,
  ymax=10, 
  xmin=0,
  xmax=13, 
  clip=false,
  domain=0:13, 
  samples=120 
]

\addplot+[no marks, very thick, gray, domain=1:12] {0.4*(0.75*x-5)^(2)+1};

\addplot+[mark=+, only marks, mark size=2pt, color=green, mark options={line width=1.5pt}] coordinates {(3, 2.75) (5,3) (5.5,2) (6,1) (6.5,1.75) (7,2.5) (8,1.25) (9.75,2.75) (10,4) (10,6) (11,4.5) (10.75,5.5) (11,7)};

\addplot+[mark=*, only marks, mark size=1.5pt, color=red, mark options={line width=1.5pt}] coordinates {(6.5,1.25) (11.5,6)}; 

\draw[thick, -stealth](7.8,1.25) -- (6.7,1.25);
\draw[thick, -stealth](10.2,6) -- (11.3,6);
\draw[very thick, -stealth, red](7.9,1.5) -- (10.1,5.75);

\node at (axis cs: 13.5,0) {$w_n$};
\node at (axis cs: 0,10.5) {$D_n$};

\end{axis}
\end{tikzpicture}
    \end{subfigure}

    \caption{Gray line shows a possible relationship between income and density. Blue crosses (left panel) show possible realisations for Black communities. Green pluses (right panel) show possible realisations for White communities. Red arrows and red dots (both panels) show a change due to a regressive transfer between two communities.}
    \label{fig:1}
\end{figure}


\paragraph{Explaining the fact (Fact 5).} Assuming this non-monotonicity, we now consider the effect of a regressive transfer from one community to another (i.e. from a poor community to a richer one). This is a clean way of studying increases in inequality. Notice that the regressive transfer does not change the network structure -- only incomes. So it will not induce any changes to density. If this regressive transfer affects two higher-income communities, then it will shift income from a lower-density community to a higher-density community. 
It then follows from \Cref{prop:inequality 1}(ii) that consumption in communities with unchanged incomes will \emph{decrease}.

In contrast, if the regressive transfer affects two lower-income communities, then the effects will be reversed. The transfer will shift income from a higher-density to a lower-density community. This will in turn lead to an \emph{increase} in consumption in unaffected communities.

If such a non-monotonic relationship exists, we would expect that Blacks are disproportionately below the trough of the density curve, while Whites are disproportionately above it, reflecting the well-documented fact that Blacks tend to have lower average incomes than Whites \citep{charles2009conspicuous}. Thus, an increase in inequality among Whites would lead to lower consumption of visible (status) goods among Whites (whose incomes did not change), while the opposite effect would be observed for Blacks. This aligns perfectly with the findings of \citet{charles2009conspicuous}.

Our model can therefore explain this puzzling empirical finding without attributing it to any differences between racial groups other than well-established income disparities. 
we view this as a useful contribution because existing signalling models struggle to explain this empirical finding without appealing to racial differences in preferences.


\section{Discrimination}\label{sec:discrimination}
As a final exercise, we extend our model to include an exogenous component to group status. To avoid confusion, we will call this component \emph{prestige}. This extension explains our sixth stylised fact -- that even controlling for income, different groups spend different amounts on status goods \citep{charles2009conspicuous, kaus2013conspicuous}. This happens because higher prestige of an agent's group: (1) reduces her consumption of the status good, and (2) makes her consumption of the status good less responsive to her own income. So high prestige groups consume less status goods conditional on income. Given that discrimination is one determinant of prestige, our framework suggests that discrimination \emph{drives} higher spending on status goods. To our knowledge, this is a novel empirical prediction. 

\paragraph{Extending the model.} To build this into our framework, we simply adjust the definition of group status, $Y_\theta$. We add some exogenous prestige to the average consumption of each identity group. Formally,
\begin{align*}
    Y_\theta = \frac{\mathbf{x}_\theta + P_\theta}{\mathbf{x}_{-\theta} + P_{-\theta}}
\end{align*}
where $P_\theta > 0$ is the prestige of group $\theta \in \{A,B\}$. Notice that group status remains relative. It now comes from two sources -- consumption of the status good (an endogenous choice) and prestige (exogenous). Here, prestige is a catch-all term for everything other than consumption decisions that determines group status.\footnote{Prestige itself is implicitly relative. We could capture this formally by requiring that $P_A + P_B = \text{const.} > 0$. Doing so would reinforce our results -- it is clear that an increase in $P_A$ and a decrease in $P_B$ push in the same direction. But it burdens the algebra without adding any insight.}

The idea that there is an exogenous component to status is very much in line with the standard approach to models of social identity within economics \citep{akerlof2000economics, shayo2009model, shayo2020social}. Implicit within this idea of an exogenous component is some underlying social structure that ascribes more prestige to some groups than to others. Our view is that one part of this social structure is discrimination -- we will return to this shortly.\footnote{These views, that social structures matter, and that social status has both an exogenous component and a behaviour-driven component, is also present in sociology \citep{bobo1999prejudice, fiske2010envy, ridgeway2014status, ridgeway2022significance}.}

\paragraph{Changing prestige.} We consider how a change to an identity group's prestige affects consumption decisions by agents in both groups. The headline is straightforward: higher prestige for identity group $A$ reduces consumption by all agents in group $A$ and increases consumption by all agents in group $B$. Mechanically, this works in exactly the same way as a change in the average centrality of group $A$, except that no agent in group $A$ experiences an increase in her own centrality -- which would act as a countervailing force (see \Cref{rem:1}, \Cref{prop:slutsky} and associated discussion). This is why the result is unambiguous for both groups.

\begin{prop}\label{prop:comp_stat_prestige}
    An increase in group $\theta$'s prestige, $P_\theta$: \\
    (i) \ \ strictly decreases consumption for all agents in group $\theta$, \\
    (ii) \ strictly increases consumption for all agents in group $- \theta$, \\
    (iii) makes consumption less sensitive to own centrality for all agents in group $\theta$.
\end{prop}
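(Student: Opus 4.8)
The plan is to re-solve the game with the prestige-augmented group status $Y_\theta = (\mathbf{x}_\theta + P_\theta)/(\mathbf{x}_{-\theta} + P_{-\theta})$, which should leave the structure of \Cref{prop:1} intact and collapse everything to the comparative statics of a single scalar. First I would take the first-order condition for $x_j$ (treating $Y_{\theta_j}$ as fixed, per the footnoted assumption), which gives $x_j = \tfrac{w_j}{\beta w_j + 1}(\alpha - \gamma Y_{\theta_j}) + \sum_k H_{jk} x_k$, and write it as $(I-H)\mathbf{x} = \mathbf{a}$. Because $H$ inherits only within-identity links from $\widehat{G}$, the inverse $(I-H)^{-1}$ is block-diagonal across the two identities, so $[(I-H)^{-1}]_{jk}=0$ whenever $\theta_j \neq \theta_k$. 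Inverting then reproduces exactly the earlier form $x_j^* = (\alpha - \gamma Y_{\theta_j})\,C_j$, with $C_j$ the centrality of \Cref{defn:bonacich centrality}. The crucial observation is that $C_j$ and the group averages $\mathcal{C}_\theta,\mathcal{C}_{-\theta}$ depend only on $\widehat{G}$, incomes and $\beta$ -- never on prestige -- so prestige enters $x_j^*$ solely through the scalar $\alpha - \gamma Y_{\theta_j}$.

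Next I would pin down the equilibrium $Y_\theta$. Averaging $x_k^* = (\alpha-\gamma Y_\theta)C_k$ over $k \in I(\theta)$ gives $\mathbf{x}_\theta = (\alpha - \gamma Y_\theta)\mathcal{C}_\theta$, and since $Y_{-\theta}=1/Y_\theta$ by construction, substituting into the definition of $Y_\theta$ and solving the resulting linear equation yields the closed form
\begin{align*}
Y_\theta = \frac{\alpha \mathcal{C}_\theta + \gamma \mathcal{C}_{-\theta} + P_\theta}{\alpha \mathcal{C}_{-\theta} + \gamma \mathcal{C}_\theta + P_{-\theta}}.
\end{align*}
Parts (i) and (ii) are then one-line sign checks. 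Differentiating, $\partial Y_\theta/\partial P_\theta = 1/(\alpha\mathcal{C}_{-\theta}+\gamma\mathcal{C}_\theta + P_{-\theta}) > 0$, so $\partial(\alpha-\gamma Y_\theta)/\partial P_\theta < 0$ and every $x_j^*$ with $j \in \theta$ falls, giving (i). For the other group, $Y_{-\theta}=1/Y_\theta$ gives $\partial Y_{-\theta}/\partial P_\theta = -Y_\theta^{-2}\,\partial Y_\theta/\partial P_\theta < 0$, hence $\partial(\alpha - \gamma Y_{-\theta})/\partial P_\theta > 0$ and every $x_j^*$ with $j \in -\theta$ rises, giving (ii). Throughout, \Cref{ass:dispersion} keeps the scaling factors positive, so the signs are unambiguous.

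For part (iii) I would lean on the multiplicative form $x_j^* = (\alpha - \gamma Y_\theta) C_j$: the response of consumption to own income runs through $C_j$ (strictly increasing in $w_j$ by \Cref{rem:2}) and is scaled by the prestige-dependent factor $\alpha - \gamma Y_\theta$, which shrinks as $P_\theta$ rises -- so the marginal response $\partial x_j^*/\partial w_j$ shrinks in magnitude. The one delicate point, and the main obstacle, is that $w_j$ also feeds into $\mathcal{C}_\theta$ and hence back into $Y_\theta$, producing a second, opposite-signed term in the full equilibrium cross-derivative: with $D = \alpha\mathcal{C}_{-\theta}+\gamma\mathcal{C}_\theta + P_{-\theta}$ one finds $\partial^2 x_j^*/\partial w_j\,\partial P_\theta = -\tfrac{\gamma}{D}\,\partial C_j/\partial w_j + \tfrac{\gamma^2 C_j}{D^2}\,\partial \mathcal{C}_\theta/\partial w_j$. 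I would dispose of this by noting that $\partial \mathcal{C}_\theta/\partial w_j$ is a single agent's own-income effect averaged over the entire identity group, hence negligible (vanishing as the population grows, and exactly consistent with the modelling assumption that each agent ignores her own impact on group status). This makes the first, negative term dominate and delivers (iii). I expect reconciling this individual-versus-group-average channel to be the only subtle step; the rest is mechanical, given the prestige-free centralities.
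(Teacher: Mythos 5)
Your proposal is correct and, for parts (i) and (ii), essentially identical to the paper's proof: re-derive $x_j^* = (\alpha - \gamma Y_{\theta_j})\,C_j$ exactly as in \Cref{prop:1}, use $Y_{-\theta} = 1/Y_\theta$ to obtain the closed form $Y_\theta = \bigl(\alpha\mathcal{C}_\theta + \gamma\mathcal{C}_{-\theta} + P_\theta\bigr)/\bigl(\alpha\mathcal{C}_{-\theta} + \gamma\mathcal{C}_\theta + P_{-\theta}\bigr)$, and sign the derivatives in $P_\theta$, using the observation that prestige never enters the centralities.

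On part (iii), however, you diverge from the paper in an instructive way: the paper's proof stops at ``the result follows immediately'' from the closed form, implicitly reading the sensitivity to own income off the multiplicative factor $(\alpha - \gamma Y_\theta)$ that scales $\partial C_j/\partial w_j$, and staying silent on the feedback of $w_j$ into $\mathcal{C}_\theta$ and hence into $Y_\theta$. The opposite-signed term you isolate in $\partial^2 x_j^*/\partial w_j\,\partial P_\theta$ is real and your computation of it is correct ($\mathcal{C}_{-\theta}$ is unaffected by $w_j$ by \Cref{rem:2}, so only the $\gamma\,\partial\mathcal{C}_\theta/\partial w_j$ piece survives in $\partial D/\partial w_j$). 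Your disposal of it by a negligibility argument is the same device the paper deploys elsewhere -- the ``$N$ sufficiently large'' requirement in \Cref{defn:communities_network} and the explicit $\bar{N}$ characterisation in the appendix -- but be aware that \Cref{prop:comp_stat_prestige} is stated for the general network with a fixed finite population, where your second term need not vanish and is not literally covered by the assumption that agents ignore their own impact on $Y_{\theta_j}$ (that assumption disciplines best responses, not equilibrium comparative statics). So strictly speaking, (iii) holds either under the direct-channel reading of ``sensitivity'' that the paper implicitly adopts, or under a many-communities-type assumption like the one you invoke; your write-up makes explicit a qualification the paper leaves unstated, which is a point in its favour rather than a gap.
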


The intuition is exactly the same as in \Cref{prop:2}. An increase in prestige raises the overall status of belonging to the group, $Y_\theta$. Due to the imperfect substitutability between status from own consumption and status from belonging to the group, this increase in $Y_\theta$ reduces the marginal benefit of own consumption. So agents do less of it. The fact that this change comes from prestige, rather than average consumption of the group does not matter. 
And of course, the impact on the other group is perfectly inverted. So agents in the other group consume more. 

\paragraph{Discrimination and the stylised fact (Fact 6).} It follows immediately from \Cref{prop:comp_stat_prestige} that differences in prestige can explain the headline finding from \cite{charles2009conspicuous}, that ``\emph{Blacks and Hispanics devote larger shares of their expenditure bundles to visible goods (clothing, jewellery, and cars) than do comparable Whites.}'' Hence \Cref{prop:comp_stat_prestige} explains our sixth stylised fact. \cite{charles2009conspicuous} also provide an explanation for this empirical regularity based on a signalling theory. But that explanation relies on particulars of the income distribution. In contrast, our explanation relies only on Blacks and Hispanics having lower prestige than Whites.

One determinant of prestige is the discrimination a group faces within a society \citep{oldmeadow2010social, fox2015denying}. Highly discriminated groups are, almost by definition, lower prestige ones (all else equal). So results about prestige can equally be phrased in terms of discrimination. The model predicts that more discrimination causes members of a group to consume more status goods, conditional on income.\footnote{In reality, we might expect discrimination to also affect incomes -- something not captured in our model (most plausibly, we think that greater discrimination would lead to lower income). So an observation that highly discriminated groups consume less status goods unconditionally is \emph{not} inconsistent with our model.}
This prediction about a causal link running from the discrimination a group faces to its status good consumption is, to the best of our knowledge, novel to economics. 

\paragraph{Implications for non-status goods.} While our model has only one good, it can be thought of as an approximation of a two-good model with a budget constraint and exogenous prices (see discussion in \Cref{sec:model_discussion}). In that two-good setting, higher discrimination would lead to a lower consumption of the non-status good -- as agents spend more of their budget on the status good in an effort to make up for their otherwise low status. 
The implications will depend critically on what the non-status good is. One plausible candidate for the non-status good is capital -- whether in the form of savings, physical or human capital. In that case, discrimination reduces an agent's investment in capital.\footnote{The notion that savings are a non-status good and are crowded out by increased consumption of status goods finds support in the empirical literature \citep{christen2005keeping, darku2014income, jaikumar2015conspicuous} -- although the impacts of discrimination have not yet been explored.} 
This interpretation is in line with empirical evidence that higher status good consumption by poor households in India reduces their calorie intake -- an important form of investment for households at the risk of malnourishment \citep{colson2022does}.\footnote{\cite{colson2022does} shows that higher inequality increases status good consumption, and reduces calorie intake, of poor households. This is easily rationalised by our model. If the change in inequality is across identity, then it is explained by \Cref{prop:2}(iv).  If the change in inequality is within identity, and the U-shaped relationship we posited in \Cref{sec:inequality} holds, then it is explained by \Cref{prop:inequality 1}.}

This suggests a new potential harm of discrimination: it causes agents to lose out on the long-term benefits of capital accumulation by pushing them towards greater status good consumption. While beyond the scope of our model, this could have important implications for welfare and for long-run inequality. Formal analysis would require us to model capital explicitly, and to take a stand on exactly how it benefits agents. We leave this to future work, and view this section as simply providing a novel link running from discrimination to consumption decisions.

\section{Conclusion}\label{sec:conclusion}
The pursuit of status is a powerful driver of consumption decisions. Many goods are consumed in part for the status they bring, not just their ‘intrinsic’ usefulness. This is well-understood in the context of luxury goods and housing, but extends even to food \citep{atkin2021we}. An impressive body of empirical work has documented a range of stylised facts regarding the consumption of these ‘status goods’. But existing theory has so far struggled to rationalise them as a coherent whole.

This paper fills that gap. We present a simple model where status comes from people’s own consumption decisions and from the social identity group they belong to, and where people also compare themselves to their neighbours. Importantly, there is some imperfect substitutability between these two sources of status. This can explain six of the most important stylized facts concerning conspicuous consumption. In doing so, we highlight the role of social identity as a source of status and hence as a driver of decision-making.

Beyond explaining existing findings, our model suggests important considerations for the design of economic and social policies. If a government wants to increase agents’ investments in certain non-status goods, it needs to be aware of the way in which social pressures can ‘divert’ additional income toward status goods. But the possibility of a substitution effect between different sources of status presents an intriguing new way of increasing these investments among certain disadvantaged groups. Specifically, increasing the status of the social identity itself can directly reduce status good consumption – in turn leading to more investments in other goods. This approach may prove preferable to directly providing financial assistance. And it suggests new benefits to combatting discrimination.
Additionally, policies aimed at altering social networks could help mitigate wasteful consumption. Different social groups typically have distinct goods that carry social status. For example, fashion items may serve as a status symbol for some groups, while sporting goods could be a status good for others. By integrating individuals from different social groups – reducing homophily in networks – social pressure to consume specific status goods may be reduced.

One feature of social identity that our model omits is the fact that individuals typically navigate multiple identities throughout their lives. For example, someone may consume fashion items to gain status among the ‘fashion conscious’, and consume high-quality sporting goods to gain status among the ‘sporty’. Although social psychologists have found that individuals often choose a single identity for a given situation, in reality people juggle many identities over time. Each identity can have its own status goods. These identities may evolve as agents interact within various social circles, which may or may not overlap. We omit this feature because we do not need it to explain our stylised facts. But we believe that adding this feature will generate novel and interesting predictions.

Overall, our approach offers a deeper understanding of the underlying mechanisms that contribute to persistent differences in spending on status good consumption across groups, as well as the broader impacts of these differences on well-being, wealth, and savings.



\newpage
\singlespacing
\bibliographystyle{abbrvnat}
\addcontentsline{toc}{section}{References}
\bibliography{bib}
\cleardoublepage
\onehalfspacing
\appendix
\numberwithin{equation}{section}
\numberwithin{thm}{section}
\numberwithin{prop}{section}
\numberwithin{defn}{section}
\numberwithin{rem}{section}
\numberwithin{lem}{section}
\numberwithin{cor}{section}
\section{Proofs}\label{appendix:proofs}
\paragraph{\Cref{prop:1}}
First, we find the best response function, holding $Y_{\theta_j}$ constant (the first derivative is linear and decreasing in $x_j$, and is positive when $x_j = 0$, so a unique solution exists). Note that we have substituted in $R_j = \alpha_j \sum_{k \in I(\theta_j)} g_{jk} x_k = \sum_{k \in I(\theta_j)} G_{jk} x_k = \sum_k \widehat{G}_{jk} x_k$.\footnote{Working with a weighted average set-up, $\alpha_j \sum_{k \in I(\theta_j)} g_{jk} x_k$, or a weighted sum set-up, $\sum_{k \in I(\theta_j)} G_{jk} x_k$, are mathematically identical. But the notation is simpler if we use a weighted sum.}
\begin{align*}
    x_j = \frac{w_j}{\beta w_j + 1} (S'_x(Y_{\theta_j}) + \beta \sum_k \widehat{G}_{jk} x_k) 
\quad \implies \quad 
    x_j - \frac{\beta w_j}{\beta w_j + 1} \sum_k \widehat{G}_{jk} x_k = \frac{w_j}{\beta w_j + 1} S'_x(Y_{\theta_j})
\end{align*}
Then recall that we defined the matrix $H$ such that $H_{jk} = \frac{\beta w_j}{\beta w_j + 1} \widehat{G}_{jk}$. We now substitute this in to the equation and switch to matrix notation. In an abuse of notation use the same notation for a vector of constants as for the equivalent scalar, and we let $\mathbf{W}$ denote the vector where the $j^{th}$ entry is $\frac{w_j}{\beta w_j + 1}$. And recall that $\odot$ is the Hadamard, or element-wise, product. Further, because we have assumed that $|\lambda_1| < 1$ (where $\lambda_1$ is the largest eigenvalue modulus of the matrix $H$. \Cref{ass:eigenvalue}), so $(I - H)$ is invertible \citep[Theorem 1]{ballester2006s}. Therefore:
\begin{align*}
    \mathbf{x} - H \mathbf{x} = \frac{1}{2} \mathbf{W} \odot S'_x(Y_{\theta})
\quad \implies \quad
    \mathbf{x} &= (I - H)^{-1} \cdot \mathbf{W} \odot S'_x(Y_{\theta}).    
\end{align*}
Reverting to scalar notation:
\begin{align*}
    x_j = \sum_k [(I - H)^{-1}]_{jk} \cdot \frac{w_k}{\beta w_k + 1} S'_x(Y_{\theta_k}).
\end{align*}
Now notice that, by construction, $\widehat{G}$ is a block matrix with off-diagonal blocks being all zero. This is because an agent $j$ can only have a link to agent $k$ if $j$ and $k$ share the same identity (i.e. if $\theta_j = \theta_k$). Therefore $H$ is also a block matrix. As a result, $H_{jk}$ will be zero whenever $j$ and $k$ have different identities (i.e. whenever $\theta_j \neq \theta_k$). An important consequence of this is that $\theta_k = \theta_j$ whenever $[(I - H)^{-1}]_{jk} > 0$. This means we can treat the term $Y_k$ as a constant and pull it out of the sum. So
\begin{align*}
    x_j = S'_x(Y_{\theta_j}) \cdot \sum_k [(I - H)^{-1}]_{jk} \cdot \frac{w_k}{\beta w_k + 1},
\end{align*}
and then using \Cref{defn:bonacich centrality}  
gives us:
\begin{align}\label{eq:fxpt1}
    x_j = S'_x(Y_{\theta_j}) \cdot C_j.
\end{align}
Recall that $Y_{\theta_j} = \frac{\mathbf{x}_A}{\mathbf{x}_B}$ if $\theta_j = A$ (and similarly if $\theta_j = B)$ and so is endogenous, where $\mathbf{x}_{\theta} = \frac{1}{|I(\theta)|} \sum_{k \in I(\theta)} x_k$. 
In an abuse of notation, we will use $Y_A$ to denote $Y_{\theta_j}$ for $j$ in identity $A$ (and similarly for $Y_B$). We will also do the rest of the proof for some $j$ in identity $A$. It is clear that everything works identically for identity $B$. 
Hence, by substituting our equation for consumption into this definition:
\begin{align*}
Y_A \quad = \quad 
    \frac{ \frac{1}{|I(A)|} \sum_{k \in I(A)} S'_x(Y_A) C_k }{ \frac{1}{|I(B)|} \sum_{k \in I(B)} S'_x(Y_B) C_k } 
\quad = \quad 
    \frac{S'_x(Y_A) \mathcal{C}_A}{S'_x(Y_B) \mathcal{C}_B}.
\end{align*}
The second equality follows directly from the definition of an average. We can pull the terms $S'_x(Y_A)$ and $S'_x(Y_B)$ through their respective sums because the status associated with an identity is the same for all agents with that identity by definition. Then using the fact that $Y_A = 1/Y_B$, we get: 
\begin{align*}
    Y_A = \frac{S'_x(Y_A) \mathcal{C}_A}{S'_x \left(\frac{1}{Y_A} \right) \mathcal{C}_B}
\end{align*}
As $S'_x(\cdot)$ is strictly decreasing (by assumption), the RHS of this equation is strictly decreasing in $Y_A$. The LHS is strictly increasing in $Y_A$. Call this solution $Y_A^*$. Substituting this into \Cref{eq:fxpt1} yields the result.
\hfill \qed \\

\noindent Before going further, we show that \Cref{ass:eigenvalue} guarantees bounds on all agents' equilibrium consumption, $x_j^*$, and the equilibrium ratio of average consumptions, $Y_\theta^*$. 
\begin{lem}\label{lem:bounds}
    Assumption 1 implies that $x_j^* \in(0, \psi \rho]$ and $Y^*_{\theta_j} \in \left(\frac{1}{\psi (1+\beta)} , \psi (1+\beta) \right)$.
\end{lem}

We relegate the proof of this lemma to \Cref{OA:additional_proofs}.
Then recall that we set $X \geq \psi \rho$ and $Z \geq \psi (1+\beta)$ (where $X$ was the upper bound on $x$ for which we have $S'_x > 0$, and $Z$ was the upper bound on $Y$ for which we have $S'_Y > 0$, and $\frac{1}{Z}$ was the `lower bound' on $Y$ at which we assumed that $S'_x$ was equal to $\rho<\infty$). So this result tells us that we have $S'_x > 0, S'_Y > 0, S''_{xY} < 0$ \emph{in equilibrium}. These properties were not required to prove \Cref{prop:1}, but will play a role in later comparative statics results.

\paragraph{\Cref{rem:1}} Follows immediately from \Cref{prop:1} (noticing that $Y_A^*$ is clearly increasing in $\mathcal{C}_A$, decreasing in $\mathcal{C}_B$, and invariant in $C_j$. \hfill \qed 

\paragraph{\Cref{rem:2}} Recall that $C_j = \sum_k \sum_{t=0}^{\infty} [H^t]_{jk} \cdot \frac{w_k}{\beta w_k + 1}$. First, notice that $\frac{w_k}{\beta w_k + 1}$ is increasing in $w_\ell$ when $k = \ell$, and is otherwise unaffected. This weakly increases $C_j$.
Second, a well-known property of the matrix inverse $(I - H)^{-1} = \sum_{t=0}^{\infty} [H^t]$ is that $\sum_{t=0}^{\infty} [H^t]_{jk}$ is increasing in $H_{\ell m}$, and strictly so if and only if there exists a walk (in $H$) from $j$ to $\ell$.\footnote{To see this, express $\sum_{t=0}^{\infty} [H^t] = H^0 + H^1 + H^2 + H^3 + ...$ It is immediate that all elements of $\sum_{t=0}^{\infty} [H^t]$ are weakly increasing in $H_{ji}$ for all $j,i$. Additionally, notice that $[H^2]_{j \ell} = \sum_k H_{jk} \cdot H_{ki}$. Therefore, $[H^2]_{j \ell}$ is positive and strictly increasing in $H_{k \ell}$ if $H_{jk} > 0$ (i.e. there is a walk of length one from $j$ to $k$). So $\sum_{t=0}^{\infty} [H^t]_{jk}$ is strictly increasing in all of $k$'s links when $j$ has a walk of length 1 to $k$. 
Similarly, $[H^{t+1}]_{j \ell} = \sum_k [H^t]_{jk} \cdot H_{k \ell}$, and $[H^{t+1}]_{j \ell}$ is positive and strictly increasing in $H_{k \ell}$ if $[H^t]_{jk} > 0$ (i.e. there is a walk of length $t$ from $j$ to $k$). So if there is a walk of some length (some $t > 0$) from $j$ to $k$, then $\sum_{t=0}^{\infty} [H^{t+1}]_{j \ell}$ is strictly increasing in $H_{\ell m}$ for all $m$. }
By definition, $H_{\ell m} = \frac{\beta w_\ell}{\beta w_\ell + 1} \cdot G_{\ell m} \cdot \mathbf{1} \{ \theta_\ell = \theta_m \}$. So $H_{\ell m}$ is increasing in $w_{\ell}$ and in $G_{\ell m}$.
Therefore $\sum_{t=0}^{\infty} [H^t]_{jk}$ is increasing in $w_\ell$ and $G_{\ell m}$, and strictly so if and only if there exists a walk (in $H$) from $\ell$ to $j$.

Note that $H_{\ell m}$ is strictly increasing in $G_{\ell m}$ whenever $\theta_\ell = \theta_m$. But for $H_{\ell m}$ to be strictly increasing in $w_\ell$, we need $\widehat{G}_{\ell m} > 0$. This need not be true for any given $m$. But if there exists a walk from $\ell$ to $j$, then it must be the case that $\widehat{G}_{\ell m} > 0$ for some $m$ (at a minimum, the link used to create the walk to $j$). So, if there exists a walk (in $H$) from $\ell$ to $j$, an increase in $w_\ell$ must strictly increase $H_{\ell m}$ for some $m$, and so must strictly increase $\sum_t [H^t]_{jk}$.
Finally, for there to be a walk in $H$ from $\ell$ to $j$, we must have $\theta_\ell = \theta_j$, because $H$ cannot contain any cross-identity links (by construction, it contains only within-identity links). \hfill \qed


\paragraph{\Cref{prop:slutsky}}
\textbf{(i)} If $\theta_j \neq \theta_k$, then $C_j$ and $\mathcal{C}_{\theta_j}$ do not vary with $w_k$ (\Cref{rem:2}). But $\mathcal{C}_{-\theta_j}$ increases, because $C_\ell$ is increasing for all $\ell$ such that $\theta_k = \theta_\ell = \theta'$, and strictly so for at least some agents (\Cref{rem:2}). The result then follows from \Cref{rem:1} -- as $x_j^*$ is strictly increasing in $\mathcal{C}_{-\theta_j}$. 

\textbf{(ii)} If $\theta_j = \theta_k$, then both $C_j$ and $\mathcal{C}_{\theta}$ are weakly increasing in $w_k$, and $\mathcal{C}_{-\theta_j}$ does not vary with $w_k$ (\Cref{rem:2}). Now take a partial derivative of \Cref{eq:closed form solution} with respect to $w_k$:
\begin{align}\label{eq:slutsky_proof}
    \frac{d x_j^*}{d w_k} &= 
    \underbrace{ C_j }_{>0} \cdot 
    \underbrace{ \frac{d S'_x(Y^*_{\theta_j})}{d Y^*_{\theta_j}} }_{<0} \cdot 
    \underbrace{ \frac{d Y^*_{\theta_j}}{d \mathcal{C}_\theta} }_{>0} \cdot 
    \underbrace{ \frac{d \mathcal{C}_\theta}{d w_k} }_{\geq 0} 
    + \underbrace{ S'_x(Y^*_{\theta_j}) }_{>0} \cdot 
    \underbrace{ \frac{d C_j}{d w_k} }_{\geq 0}
%
\end{align}

\paragraph{\Cref{prop:2}}
By \Cref{defn:communities_network}, each community $n$ is a separate component of the network. So, for agents $k \in n$, their centrality, $C_k$, depends only on $H_k$ (the subgraph of $H$ containing only agents $j' \in n$), not on the full network $H$. Consider an increase in $w_k$ for some $k \in n$. 

\textbf{Obs 1:} $\frac{d S'_x(Y^*_{\theta_j})}{d Y^*_{\theta_j}} < 0$ and $ \frac{d Y^*_{\theta_j}}{d \mathcal{C}_{\theta_j}} > 0$. The first is by assumption. The second follows from the definition of $Y^*_{\theta_j}$.
\textbf{Obs. 2:} if $j \in n$, then $\frac{d C_j}{d w_k} > 0$. This follows from \Cref{rem:2} (and the fact that by \Cref{defn:communities_network}, there is a walk from $j$ to $j'$ for all $j, j' \in n$). 
\textbf{Obs. 3:} if $j \notin n$, then $\frac{d C_j}{d w_k} = 0$. This is because $C_j$ depends only on agents in the same community (see above), and $j,k$ are in different communities. 
\textbf{Obs. 4:} if $\theta_j = \theta_k$, then $\frac{d \mathcal{C}_{\theta_j}}{d w_k} > 0$. Note that $\mathcal{C}_{\theta_j}$ is the average centrality of agents with identity $\theta_j$. And agent $k$ has identity $\theta_j$, and so is in a community of other agents with identity $\theta_j$. Then it follows from \Cref{rem:2} that the centralities of all agents in that community are strictly increasing in $w_k$. And the centralities in all other communities are unaffected (this is Obs. 3). And so the average centrality for the overall identity is increasing in $w_k$. We can decompose it as: $\frac{d \mathcal{C}_{\theta_j}}{d w_k} = \frac{\phi}{N} \frac{d \mathcal{C}^n_{\theta_j}}{d w_k} = \frac{\phi}{N} \frac{1}{M} \sum_{j \in n} \frac{d C_j}{d w_k}$, where $\mathcal{C}^n_{\theta_j}$ denotes the average centrality in community $n$, $M$ is the number of agents in community $n$, and $\phi \in (0,1)$ is the fraction of communities that have identity $\theta_j$.

\textbf{Part (i)} follows from Obs. 1, Obs. 3 and Obs. 4. \textbf{Part (ii)} follows from Obs. 2 and Obs. 3 \emph{and} the fact that with many different communities with identity $\theta$, the impact of the increase in $C_{j'}$ must outweigh the impact of the increase in $\mathcal{C}_{\theta}$ (so $\frac{d C_j}{d w_k} >> \frac{d \mathcal{C}_\theta}{d w_k}$).
\textbf{Part (iii)} follows from \Cref{prop:slutsky}(i).
\textbf{Part (iv)} as $j,k$ are in different communities, we have $\frac{d C_j}{d w_k} = 0$ (Obs. 2). The remaining expression in \Cref{eq:slutsky_proof} is:
\begin{align*}
    \frac{d x_j^*}{d w_k} &= C_j \cdot 
    \frac{d S'_x(Y^*_{\theta_j})}{d Y^*_{\theta_j}} \cdot 
    \frac{d Y^*_{\theta_j}}{d \mathcal{C}_{\theta_j}} \cdot 
    \frac{d \mathcal{C}_{\theta_j}}{d w_k} < 0
\end{align*}
Now, use the fact that $\frac{d \mathcal{C}_{\theta_j}}{d w_k} = \frac{\phi}{N} \frac{d \mathcal{C}^n_{\theta_j}}{d w_k}$ to get
\begin{align*}
    \frac{d x_j^*}{d w_k} &= C_j \cdot \frac{1}{N} \cdot \psi \quad \text{where} \quad
    \psi = 
    \frac{d S'_x(Y^*_{\theta_j})}{d Y^*_{\theta_j}} \cdot 
    \frac{d Y^*_{\theta_j}}{d \mathcal{C}_{\theta_j}} \cdot 
    \frac{d \mathcal{C}_{\theta_j}^n}{d w_k} < 0
\end{align*}
So we have
\begin{align}\label{eq:prop3-pt4}
     \frac{d^2 x_j^*}{d w_k d w_j} &= \frac{\phi}{N} \left[ \frac{d C_j}{d w_j} \cdot \psi + C_j \cdot \frac{d \psi}{d w_j} \right]
\end{align}
We know that $\frac{d C_j}{d w_j} >0$ (by \Cref{rem:2}). 

So the final step is to show that $\frac{d \psi}{d w_j} \propto \frac{1}{N}$. This will ensure that, for large enough $N$, the first term in the square brackets (immediately above) dominates.
To lighten the notation in this next step, we will let $P = \frac{d S'_x(Y^*_{\theta_j})}{d Y^*_{\theta_j}}, Q = \frac{d Y^*_{\theta_j}}{d \mathcal{C}_{\theta_j}}, R^n = \frac{d \mathcal{C}_{\theta_j}^n}{d w_k}$. Using this:
\begin{align*}
    \frac{d \psi}{d w_j} &= \frac{d P}{d w_j} Q R^n + \frac{d Q}{d w_j} P R^n + \frac{d \left( \frac{d \mathcal{C}_{\theta_j}^n}{d w_k} \right)}{d w_j} P Q.
\end{align*}
Now notice that $\mathcal{C}_\theta^n$ is the average centrality in community $n$. So it does not depend on agents outside of community $n$ (either their incomes or their network position). Importantly, agent $j$ is not in community $n$ by assumption. Therefore $\frac{d \left( \frac{d \mathcal{C}_{\theta_j}^n}{d w_k} \right)}{d w_j} = 0$. Next, apply the chain rule to obtain:
\begin{align*}
    \frac{d \psi}{d w_j} =R^n \left[ 
    \frac{d P}{d Y^*_{\theta_j}} \frac{d Y^*_{\theta_j}}{d \mathcal{C}_{\theta_j}} \frac{d \mathcal{C}_{\theta_j}}{d w_j} Q 
    + \frac{d Q}{d \mathcal{C}_{\theta_j}} \frac{d \mathcal{C}_{\theta_j}}{d w_j} P \right] 
    = R^n  \frac{d \mathcal{C}_{\theta_j}}{d w_j} \left[ 
    \frac{d P}{d Y^*_{\theta_j}} \frac{d Y^*_{\theta_j}}{d \mathcal{C}_{\theta_j}} Q 
    + \frac{d Q}{d \mathcal{C}_{\theta_j}} P \right].
\end{align*}
Agent $j$ is not in community $n$, so denote their community $\hat{n}$. Then we have 
\begin{align*}
    \frac{d \psi}{d w_j} &= \frac{\phi}{N} \frac{d \mathcal{C}_{\theta_j}^n}{d w_k} \frac{d \mathcal{C}_{\theta_j}^{\hat{n}}}{d w_j}  \left[ 
    \frac{d P}{d Y^*_{\theta_j}} \frac{d Y^*_{\theta_j}}{d \mathcal{C}_{\theta_j}} Q 
    + \frac{d Q}{d \mathcal{C}_{\theta_j}} P \right].
\end{align*}
Hence $\frac{d \psi}{d w_j}$ is proportional to $\frac{1}{N}$. So for large enough $N$, the first term in the square brackets of \Cref{eq:prop3-pt4} dominates, and $\frac{d^2 x_j^*}{d w_k d w_j} < 0$. 

\begin{lem}\label{rem:centrality simple}
Suppose the network is a communities network (a la \Cref{defn:communities_network}), all communities are the same size and all agents in a given community, $n$, have the same income, $w_{n}$.

\noindent Then $C_j = w_{n} \cdot \sum_k (I - \widehat{G})^{-1}_{jk}$ for all $j \in n$.
\end{lem}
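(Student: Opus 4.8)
The plan is to peel the global centrality in \Cref{defn:bonacich centrality} down to a purely within-community object and then exploit the fact that income is constant inside each community. The structural observation that drives everything is that, by \Cref{defn:communities_network}, the communities are separate strongly connected components, so $\widehat{G}$ is block-diagonal with one block per community. Since $H$ is obtained from $\widehat{G}$ by left-multiplication by the diagonal matrix $D$ with $D_{jj} = \frac{\beta w_j}{\beta w_j + 1}$, the matrix $H$ inherits this block structure, and hence so does $(I-H)^{-1} = \sum_{t \geq 0} H^t$ (which is well-defined by \Cref{ass:eigenvalue}). Consequently $\left[(I-H)^{-1}\right]_{jk} = 0$ whenever $j,k$ lie in different communities, so for $j \in n$ the sum defining $C_j$ collapses to a sum over $k \in n$ only. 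This reduces the whole problem to the single block $\widehat{G}_n$.

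Next I would use the constant-income assumption $w_k = w_n$ for all $k \in n$. This turns the weight $\frac{w_k}{\beta w_k + 1}$ into the constant $\frac{w_n}{\beta w_n + 1}$ (which pulls out of the sum) and makes the relevant block of $H$ equal to $c_n \widehat{G}_n$ with $c_n := \frac{\beta w_n}{\beta w_n + 1}$, giving
\begin{align*}
C_j = \frac{w_n}{\beta w_n + 1} \sum_{k} \left[(I - c_n \widehat{G}_n)^{-1}\right]_{jk}.
\end{align*}
A cleaner route to the same point is to note that the weight vector is exactly $\frac{1}{\beta} D \mathbf{1}$, so that $\mathbf{C} = (I-H)^{-1}\frac{1}{\beta} D\mathbf{1} = \frac{1}{\beta}(D^{-1} - \widehat{G})^{-1}\mathbf{1}$ via the identity $(I - D\widehat{G})^{-1}D = (D^{-1}-\widehat{G})^{-1}$; restricted to community $n$, where $D^{-1}$ equals the scalar $\frac{\beta w_n + 1}{\beta w_n}$ times the identity, this recovers the same within-community expression.

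The last step — and the one I expect to be the main obstacle — is to reconcile this rescaled inverse $(I - c_n\widehat{G}_n)^{-1}$ with the raw-network inverse $(I - \widehat{G})^{-1}$ that defines community density in \Cref{defn:island density}, and to check that the income prefactor collapses to exactly $w_n$. The difficulty is that $c_n$ sits \emph{inside} the inverse and cannot be extracted as an overall scalar, so the claim is genuinely an assertion about how the income weight $\frac{w_n}{\beta w_n+1}$ and the internal rescaling $c_n$ combine. I would attack this by expanding both inverses as Neumann series, $\sum_{t\geq 0} c_n^t \widehat{G}_n^t$ versus $\sum_{t\geq 0}\widehat{G}_n^t$, tracking the $c_n^t$ factors against the prefactor term by term, and by first testing the resulting identity on a one-agent and a symmetric two-agent community to pin down the exact normalization before committing to the general argument. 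Everything up to this identification (the block-diagonal reduction and the constant-income factoring) is routine; it is this matching of the rescaled inverse and the income weights to the raw Bonacich row sum that is the computational heart of the lemma.
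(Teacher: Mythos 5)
Your two completed steps (the block-diagonal reduction of $(I-H)^{-1}$ and the constant-income factoring) are exactly the paper's first steps, and they are correct. But the step you flagged as the ``computational heart'' is not a hole in your write-up waiting to be filled: it is where the lemma itself fails, and your own proposed sanity check already detects this. For a one-agent community ($\widehat{G}_n = 0$), \Cref{defn:bonacich centrality} gives $C_j = \frac{w_n}{\beta w_n + 1}$, while the lemma asserts $C_j = w_n$; these disagree for every $\beta > 0$. Your term-by-term Neumann comparison makes the obstruction precise: the coefficient of $\widehat{G}_n^t$ in the true expression is $\frac{w_n}{\beta w_n + 1} c_n^t$ with $c_n = \frac{\beta w_n}{\beta w_n+1}$, whereas the asserted formula requires the coefficient $w_n$ for every $t$, and the ratio depends on $t$ -- so no overall normalisation can reconcile them, exactly as you suspected from the fact that $c_n$ sits inside the inverse. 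The paper's own proof reaches the stated formula only by factoring $\sum_{t} (c_n \widehat{G}_n)^t$ into $\left(\sum_t c_n^t\right)\cdot\left(\sum_t \widehat{G}_n^t\right)$ and then using $\sum_t c_n^t = \beta w_n + 1$ to cancel the prefactor $\frac{w_n}{\beta w_n+1}$. That factorisation is invalid for a matrix power series (it inserts spurious cross terms $c_n^s \widehat{G}_n^t$ with $s \neq t$), so the discrepancy is an error in the paper, not a defect of your approach. A further symptom: \Cref{ass:eigenvalue} constrains $H = c_n \widehat{G}_n$ blockwise, which permits the largest eigenvalue modulus of $\widehat{G}_n$ itself to reach or exceed one, in which case the right-hand side $\sum_k (I - \widehat{G})^{-1}_{jk}$ of the lemma need not even be well-defined while the left-hand side is.

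Your ``cleaner route'' is in fact the correct endpoint rather than a detour: within community $n$,
\begin{align*}
C_j \;=\; \frac{w_n}{\beta w_n+1}\sum_k \left[\left(I - \tfrac{\beta w_n}{\beta w_n+1}\,\widehat{G}_n\right)^{-1}\right]_{jk} \;=\; \frac{1}{\beta}\sum_k \left[\left(\tfrac{\beta w_n+1}{\beta w_n}\, I - \widehat{G}_n\right)^{-1}\right]_{jk},
\end{align*}
which is increasing in $w_n$ but is not $w_n$ times the raw row sum of $(I-\widehat{G})^{-1}$. The natural repair is to replace \Cref{defn:island density} by the income-dependent density $D_n(w_n) = \sum_{j,k \in n}\left[\left(I - \tfrac{\beta w_n}{\beta w_n + 1}\widehat{G}_n\right)^{-1}\right]_{jk}$, so that the community total of centralities is $\frac{w_n}{\beta w_n + 1} D_n(w_n)$. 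Note this is not cosmetic downstream: the proof of \Cref{prop:inequality 1} uses the exact identity $\mathcal{C}_\theta = \frac{2}{J}\sum_n w_n D_n$ to obtain the clean criterion $D_{n'} \lessgtr D_n$, and with the corrected formula the sign of the effect of a marginal transfer depends on comparing $\frac{d}{dw}\left[\frac{w}{\beta w + 1} D_n(w)\right]$ across the two communities, since both the prefactor and the density now move with income.
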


\paragraph{Proof.} Start with the definition of Bonacich centrality (\Cref{defn:bonacich centrality}). Then apply (a) $H_{ij} = \left[\frac{\beta w}{\beta w + 1} \widehat{G}\right]_{ij}$ for all $i,j$, and (b) the Neumann series representation: $(I - H)^{-1} = \sum_{t=0}^{\infty} H^t$. 
Next, notice that under \Cref{defn:communities_network}, each community is a disconnected component in the network $\widehat{G}$. Therefore, we can treat each community separately. So we can replace $\widehat{G}$ with $\widehat{G}_n$, which is the graph for community $n$ only. Then recall that all agents in community $n$ have the same income, $w_n$. This yields:
\begin{align*}
    C_{j} = \sum_{k} \left[ \sum_{t=0}^{\infty} \left( \frac{\beta w_n}{\beta w_n + 1} \widehat{G}_n \right)^t \ \right]_{jk} \cdot \frac{w_n}{\beta w_n + 1}.
\end{align*}
Now the $\frac{w_n}{\beta w_n + 1}$ term at the end (far right) can be pulled out the front, and the same term that appears within the infinite sum can be separated from the $\widehat{G}_n$ term, and can also be pulled out of the summation over $k$.
\begin{align*}
    C_j = \frac{w_n}{\beta w_n + 1}  \cdot \sum_{t=0}^{\infty} \left( \frac{\beta w_n}{\beta w_n + 1} \right)^t \cdot \sum_{k} \left[ \sum_{t=0}^{\infty} \widehat{G}^t \ \right]_{jk}
\end{align*}
Then apply the standard closed form solution for the infinite sum of a geometric series (which always exists because $\frac{w_n}{\beta w_n + 1} < 1$ by construction), and notice that the result (partially) cancels with the $\frac{w_n}{\beta w_n + 1}$ term. This yields:\footnote{Where the final inequality is by the Neumann series representation of a matrix inverse.}
\begin{align*}
    C_j = w_n \cdot \sum_{k} \left[ \sum_{t=0}^{\infty} \widehat{G}^t \ \right]_{jk} \ = \  w_n \sum_k (I - \widehat{G})^{-1}_{jk}.
\end{align*}
\hfill \qed

\paragraph{\Cref{prop:inequality 1}}
First, it follows from \Cref{rem:centrality simple} that centrality rises in community $n'$ and falls in community $n$. Next, we can express average Bonacich centrality of an identity $\theta$ as the sum over agents in each community, then summed over communities with identity $\theta$ (rather than the sum over agents directly), finally divided by $\tfrac{1}{2} J$ (the number of agents with identity $\theta$). That is:
\begin{align}
    \mathcal{C}_{\theta} = \frac{2}{J} \sum_{n \in I(\theta)} \sum_{j \in n} w_n \sum_{k} (I - \widehat{G}_n)^{-1}_{jk},
\end{align}
where the sum over $n \in I(\theta)$ is an abuse of notation denoting a sum over all communities that have identity $\theta$.
Then, we use the fact that income is the same for all agents in a given community $n$, and substitute in for \Cref{defn:island density}. This yields:
\begin{align}\label{eq:centrality_density}
    \mathcal{C}_{\theta} = \frac{2}{J} \sum_{n \in I(\theta)} w_n D_n.
\end{align}
\textbf{Observation.} It is then clear that a transfer from community $n$ to $n'$: (1) decreases the centrality of agents in community $n$, (2) increases the centrality of agents in community $n'$, and (3) increases average Bonacich centrality of identity $\theta$ if and only if $D_{n'} > D_n$.

\textbf{Part (i)} For the communities $n$ and $n'$, the effect of own centrality dominates the effect of average centrality (because there are many communities). Centrality, and hence consumption, rises [resp. falls] for community $n'$ [resp. $n$].

\textbf{Part (ii)} For the other communities, own centrality does not change. Only average centrality matters. The result then follows from the observation above, and the fact that consumption is decreasing in average centrality (see \Cref{rem:1}). \hfill \qed

\paragraph{\Cref{prop:comp_stat_prestige}}
We first provide the closed-form solution for $x_j^*$ with the expanded definition of $Y_{\theta_j}$. Follow the proof to \Cref{prop:1} exactly up to the point where $x_j = S'_x(Y_{\theta_j}) C_j$. Then use the new definition for group status: $Y_{\theta} = \frac{\mathbf{x}_{\theta} + P_{\theta}}{\mathbf{x}_{-\theta} + P_{- \theta}}$. Substituting in for $\mathbf{x}_\theta$ (and pulling the $S'_x(Y_{\theta_j})$ through the summation in exactly the same way as in the proof to \Cref{prop:1}) yields:
\begin{align*}
    Y_\theta = \frac{S'_x(Y_\theta) \mathcal{C}_\theta + P_\theta}{S'_x(Y_{-\theta}) \mathcal{C}_{-\theta} + P_{-\theta}}.
\end{align*}
As in \Cref{prop:1}, LHS is strictly increasing in $Y_\theta$, and RHS is strictly decreasing in $Y_\theta$. So there is a unique solution $Y^*_\theta$. Then notice that RHS is increasing in $P_\theta$ and decreasing in $P_{-\theta}$. Therefore $Y^*_{\theta}$ is increasing in $P_\theta$ and decreasing in $P_{-\theta}$. 
It is then clear that $d x_j^* / d P_{\theta_j} < 0$ (\textbf{part (i)}) and $d x_j^* / d P_{- \theta_j} > 0$ (\textbf{part (ii)}) for all $j$. Finally, recall that $\frac{d x^*_j}{d C_j} = S'_x(Y^*_{\theta_j}) > 0$. Hence, $\frac{d^2 x^*_j}{d C_j d P_{\theta_j}} = \frac{d S'_x(Y^*_{\theta_j}) }{d Y^*_{\theta_j}} \cdot \frac{d Y^*_{\theta_j}}{d P_{\theta_j}} < 0$ (\textbf{part (iii)}). \hfill \qed

\newpage
\section*{For Online Publication}
\section{Additional Proofs}\label{OA:additional_proofs}

\paragraph{Proof of \Cref{lem:bounds}} \textbf{Step 1: upper and lower bounds on centralities.} First, notice that $H_{jk} \leq G_{jk}$ for all $j,k$. Therefore, the largest eigenvalue modulus of $H$ is weakly smaller than the largest eigenvalue modulus of $G$. This follows from standard results about matrix inverses. Then, coupled with Assumption 1, $(I-H)^{-1}$ is well-defined, and $\left[(I - H)^{-1}\right]_{jk} < \infty$ for all $j,k$. Additionally, $\frac{w_k}{\beta w_k + 1} < \infty$ for all $k$. So we have $C_j < \infty$ for all $j$. And with a finite number of agents, maximum value of their centralities, $\psi$, must exist and must be finite.

By standard properties of a matrix inverse, $\left[(I - H)^{-1}\right]_{jj} = 1$ and $\left[(I - H)^{-1}\right]_{jk} \geq 0$ for all $k\neq j$. And it is easy to verify that $\frac{w_k}{\beta w_k + 1}$ is strictly increasing in $w_k$. Therefore $\frac{w_k}{\beta w_k + 1} \geq \frac{1}{\beta + 1}$ because $w_k \in [1,\infty)$ by assumption. Combining these two observations yields $C_j \geq \frac{1}{\beta+1}$ for all $j$. So $\frac{1}{\beta+1}$ is a lower bound on centralities. 

\textbf{Step 2: bounds on ratio of average centralities.} Next, consider the ratio of average centralities. It is clear that the `most extreme' scenario is that all agents in one identity group have the maximum centrality (of $\psi$) and all agents in the other identity group have the minimum centrality of $1/(1+\beta)$. Together, this means that $\mathcal{C}_\theta / \mathcal{C}_{-\theta} \in (\frac{1}{\psi (1+\beta)} , \psi (1+\beta) )$. 

\textbf{Step 3: bounds on $Y^*_{\theta}$.} 
We have from \Cref{prop:1} that $Y_{\theta}^* $ solves $Y_{\theta} = \frac{S'_x(Y_{\theta}) \mathcal{C}_{\theta}}{S'_x(\frac{1}{Y_{\theta}}) \mathcal{C}_{-\theta}}$. Note that LHS is increasing in $Y_\theta$, and RHS is decreasing in $Y_\theta$. The second part is because, by assumption, $S'_x(\cdot)$ is strictly decreasing. 
WLOG, label the identities $\hat\theta, \tilde\theta$ such that $\mathcal{C}_{\hat\theta} \geq \mathcal{C}_{\tilde\theta}$.
First, consider $Y_{\hat\theta}$. Suppose $Y_{\hat\theta} = 1$. Then $1 < \frac{S'_x(1) }{S'_x(1)} \frac{\mathcal{C}_{\hat\theta}}{\mathcal{C}_{\tilde\theta}}$. Therefore $Y^*_{\hat\theta} \geq 1$. Suppose $Y_{\hat\theta} = \frac{\mathcal{C}_{\hat\theta}}{\mathcal{C}_{\tilde\theta}}$. Then $$\frac{\mathcal{C}_{\hat\theta}}{\mathcal{C}_{\tilde\theta}} > \frac{S'_x(\frac{\mathcal{C}_{\hat\theta}}{\mathcal{C}_{\tilde\theta}}) }{S'_x(\frac{\mathcal{C}_{\tilde\theta}}{\mathcal{C}_{\hat\theta}})} \frac{\mathcal{C}_{\hat\theta}}{\mathcal{C}_{\tilde\theta}}.$$
This is because $S'_x(\cdot)$ is strictly decreasing, and so $\mathcal{C}_{\hat\theta} \geq \mathcal{C}_{\tilde\theta} \implies S'_x(\frac{\mathcal{C}_{\hat\theta}}{\mathcal{C}_{\tilde\theta}}) \leq S'_x(\frac{\mathcal{C}_{\tilde\theta}}{\mathcal{C}_{\hat\theta}})$. Therefore $Y^*_{\hat\theta} \leq \frac{\mathcal{C}_{\hat\theta}}{\mathcal{C}_{\tilde\theta}}$.
%
%
Identical logic yields $Y^*_{\tilde\theta} \leq 1$ and $Y^*_{\tilde\theta} \geq \frac{\mathcal{C}_{\tilde\theta}}{\mathcal{C}_{\hat\theta}}$.
Putting these together gives us bounds: $Y^*_\theta \in (\frac{\mathcal{C}_{\tilde\theta}}{\mathcal{C}_{\hat\theta}}, \frac{\mathcal{C}_{\hat\theta}}{\mathcal{C}_{\tilde\theta}})$. Then substituting in the bounds we have on $\mathcal{C}_\theta / \mathcal{C}_{-\theta}$ yields:
$$Y^*_\theta \in \left(\frac{1}{\psi (1+\beta)} , \psi (1+\beta) \right).$$

\textbf{Step 4. Bounds on $x_j^*$.} From \Cref{prop:1}, we have that $x_j^* = S'_x(Y_{\theta_j}^*) C_j$. And we know that $S'_x(Y_{\theta_j}^*) \in (0, \rho]$. The lower bound is by assumption, and the upper bound is from Step 3. And we know that $C_j \in (\frac{1}{\beta+1}, \psi]$. The bounds are from Step 1. Therefore we have $x_j^* \in (0, \psi \rho]$. \hfill \qed

\paragraph{Making a slightly stronger assumption.} In the proof to \Cref{lem:bounds}, we consider the most extreme scenario possible -- where all agents with one identity had the lowest centrality possible, and all agents in the other identity group had the maximum centrality. While it allows us to state our results under the weakest possible assumptions, it is grossly unrealistic. In reality, the ratio of the group-average centralities will be (much) less severe. If we are willing to assume that the ratio is at most $\kappa \geq 1$, then we can skip steps 1 and 2 of the proof to \Cref{lem:bounds}, and directly assume that $\mathcal{C}_\theta / \mathcal{C}_{-\theta} \in (\frac{1}{\kappa} , \kappa)$. Redoing step 3 with this new assumption will then yield $Y_\theta^* \in (\frac{1}{\kappa} , \kappa)$. There is an indirect effect on step 4. As we must now have $Y_\theta^* \geq \frac{1}{\kappa}$, we must have $x_j^* \leq S'_x(\frac{1}{\kappa}) \psi$.

This is more than a curiosity. A stronger assumption on $\mathcal{C}_\theta / \mathcal{C}_{-\theta}$ allows for stronger substitutability between own $x$ and $Y$ while maintaining our assumptions $S'_x > 0, S'_Y > 0, S''_{xY} < 0$. Consider our example function for $S(x,Y)$ from \Cref{sec:model}: $S(x,Y) = \alpha x + Y - \gamma x Y$. There, $\gamma$ captures the `strength' of the substitutabilities. Under \Cref{ass:eigenvalue}, we required $\gamma < \min\{\frac{\eta}{\psi (1+\beta)}, \frac{1}{\psi \rho}\}$, recalling that we needed $X \geq \psi \rho$ and $Z \geq \psi (1+\beta)$, where $\psi$ is the maximum centrality of any agent. So the substitutabilities need to be fairly weak.
In contrast, if we are willing to assume that $\mathcal{C}_\theta / \mathcal{C}_{-\theta} \in (\frac{1}{\kappa} , \kappa)$, then we only need $\gamma < \min\{\frac{\eta}{\kappa}, \frac{1}{\psi S'_x(\frac{1}{\kappa})} \}$.

\begin{lem}\label{lem:extra1}
    (i) Our measure of Centrality (\Cref{defn:bonacich centrality}) exists and is well defined if and only if \Cref{ass:eigenvalue} holds.
\end{lem}
\paragraph{Proof.} \textbf{Part (i)} Centrality is well-defined if and only if $\sum_{t=0}^{\infty} H^t$ converges. 
By the Neumann Series representation, we have $\sum_{t=0}^{\infty} H^t = (I - H)^{-1}$. 
    \citet[Theorem 1]{ballester2006s} then shows that this matrix inverse is well-defined and non-negative if and only if $|\lambda_1| < 1$.
%
\hfill \qed

\paragraph{Extension of \Cref{prop:inequality 1} from \Cref{ft:inequality_total_effect}}
This proof picks up from the end of the proof to \Cref{prop:inequality 1}, and uses arguments from there.
As a reminder, $i,j,k,\ell$ denote agents, and $m,n$ denote communities. It is clear that the change can only affect agents with identity $\theta$, so we focus only on this identity group. Total consumption of identity $\theta$ agents, which we denote $X_\theta$, is: 
\begin{align}
    X_{\theta} = \sum_{i \in I(\theta)} x_i^* &= 
    \sum_{m \in I(\theta)} \sum_{j \in m'} S'_x(Y_\theta^*) \cdot C_j \\
    &= S'_x(Y_\theta^*) \sum_{m \in I(\theta)} \sum_{j \in m'} C_j \\
    &= S'_x(Y_\theta^*)  \sum_{m \in I(\theta)} w_m D_m \label{eq:A9}
\end{align}
W.L.O.G. suppose that the transfer of income from community $n$ to $n'$ raises income by $\epsilon > 0$ for all $j \in n'$, and reduces income by $\epsilon$ for all $j \in n$.\footnote{The fact that all communities are of equal size ensures that the size of the increase in income for each $j \in n'$ is the same as the decrease in income for each $j \in n$}

\paragraph{Step 1: impact of transfer on $\sum_{m \in I(\theta)} w_m D_m$} It is clear that this transfer increases $\sum_{m \in I(\theta)} w_m D_m$ if and only if $D_{n'} > D_n$. To see this, formally let $w_m$ denote pre-transfer incomes, and $w'_m$ denote post transfer incomes. Then we have $\sum_{m \in I(\theta)} w'_m D_m = (w_n - \epsilon) D_n + (w_{n'} + \epsilon) D_{n'} + \sum_{m \neq n, n'} w_m D_m$. 

\paragraph{Step 2: impact of $ \sum_{m \in I(\theta)} w_m D_m$ on $X_{\theta}$.} To streamline notation, we will let $\phi := \sum_{m \in I(\theta)} w_m D_m$ for this step.
First, recall (from \Cref{prop:1}) that $Y^*_\theta$ solves 
\begin{align*}
    Y_\theta = \frac{S'_x(Y_\theta) \mathcal{C}_\theta}{S'_x \left(\frac{1}{Y_\theta} \right) \mathcal{C}_{- \theta}} = \frac{S'_x(Y_\theta) \frac{2}{J} \phi}{S'_x \left(\frac{1}{Y_\theta} \right) \mathcal{C}_{- \theta}} 
\end{align*}
Then the second inequality uses our definition of $\mathcal{C}_\theta$ from \cref{eq:centrality_density} and the definition of $\phi$ (immediately above). Rearranging yields:
\begin{align*}
    \frac{J}{2} Y_\theta S'_x \left(\frac{1}{Y_\theta} \right) \mathcal{C}_{-\theta} = S'_x(Y_\theta) \phi
\end{align*}
The LHS of this equation is strictly increasing in $Y_\theta$. And the RHS is strictly decreasing in $Y_\theta$, and strictly increasing in $\phi$. Therefore the value of $S'_x(Y_\theta) \phi$ at the solution to the equation must be strictly increasing in $\phi$. To see this, notice that increasing $\phi$ shifts the curve of the LHS up (when plotting the value of the expression against $Y_\theta$). This upwards shift must mean the LHS curve now intersects the RHS curve at a higher value. Therefore $S'_x(Y^*_\theta) \phi$ is strictly increasing in $\phi$.

Finally, substitute \cref{eq:centrality_density} and the definition of $\phi$ into \Cref{eq:A9}:
\begin{align}
    X_{\theta} &= S'_x(Y^*_{\theta}) \cdot \phi
\end{align}
So $X_\theta$ is strictly increasing in $\phi$. The result follows from this fact, coupled with the observation from step 1 that the transfer increases $\phi$ if and only if $D_{n'} > D_n$. \hfill \qed

\paragraph{Convexity of the centrality measure.} First, recall that by the Neumann series representation, we have $(I-H)^{-1} = \sum_{t=0}^{\infty} H^t$. And define $\mathcal{Z}_{ij}(H) = \sum_{t=}^{\infty} [H^t]_{ij}$. This writes the entries of $(I-H)^{-1}$ as a function of $H$. Second, take the derivative in the direction of some arbitrary matrix $A$. This yields: $D \mathcal{Z} (H)[A]= (I-H)^{-1} A (I-H)^{-1}$. 

Then take the second derivative. This yields: $D^2 \mathcal{Z}(H)[A,A]= (I-H)^{-1} A (I-H)^{-1} A (I-H)^{-1}$. Hence $D^2 \mathcal{Z}_{ij}(H)[A,A]=[(I-H)^{-1} A (I-H)^{-1} A (I-H)^{-1}]_{ij}$. Since $(I-H)^{-1}$ is a non-negative matrix and $A$ is non-negative, 
we must have $D^2 \mathcal{Z}_{ij}(H)[A,A] \geq 0$ for all $i,j$ and all non-zero $A$. 

So $\mathcal{Z}_{ij}(H) = \sum_{t=}^{\infty} [H^t]_{ij} = [(I - H)^{-1}]_{ij}$ is weakly convex in each entry of $H$. Finally, recall that centrality is defined as a weighted sum of entries of $(I-H)^{-1}$. Formally: $C_i = \sum_{j} \left[(I - H)^{-1}\right]_{ij} \cdot \frac{w_j}{\beta w_j + 1}$. Because all entries of $(I - H)^{-1}$ are weakly convex in entries of $H$, centrality must also be weakly convex in entries of $H$.

\newpage
\section{Keeping up with the Joneses: an alternative to Dissonance}\label{OA:KUWJ}
The model in \Cref{sec:model} assumes that agents feel some pressure to conform to their respective reference points, which is a benchmark for what constitutes `normal'. They experience dissonance costs from being either above or below their reference point. This approach is motivated by the Social Identity Theory literature, which views deviation from a reference point as in some sense costly or unpleasant, irrespective of whether this is a deviation above or below.

However, a large part of the economics literature on social comparisons makes an alternative assumption: that agents experience costs from being below their reference point, but receive benefits from being above it. This is often interpreted as pressure to `Keep up with the Joneses' (or more accurately, to get ahead of them). 
We show that our main results are robust to this alternative assumption.
In order to keep the model clean, we simplify the cost function by assuming that it is linear and that income is the same for everyone ($w_j = w$ for all $j$). We also impose a particular functional form on the status benefits. This is just for convenience.
So an agent's preferences are now:
\begin{align*}
    u_j &= \underbrace{ \alpha x_j + Y_{\theta_j} - \gamma x_j Y_{\theta_j} }_{Status} + SOC -\frac{1}{w} x_j. \\
    \text{ where }
    SOC &= 
    \begin{cases}
        \quad \beta \sqrt{x_j - R_j} \text{ if } x_j \geq R_j \\
        - \beta' \sqrt{R_j - x_j} \text{ if } x_j < R_j
    \end{cases}
    \text{ with } \beta > 0, \beta' > 0
\end{align*}

We must have $(\alpha - \frac{1}{w}) > 0$ for the problem to be interesting -- otherwise actions clearly explode to infinity. Analogously to the main text, we will also assume that $|\lambda_1(\widehat{G})| < 1$, where $\lambda_1(\widehat{G})$ is the largest eigenvalue modulus of the matrix $\widehat{G}$. And define Bonacich centrality: $C_j^{bon} = \sum_k (I- \widehat{G})^{-1}_{jk}$. Note that here we are using the `standard' definition on Bonacich centrality. This is due to the simplifying assumption that incomes are equal for all agents.
With this setup and machinery, we can again characterise equilibrium behaviour. 

\begin{prop}
There exists a unique equilibrium. For all $j$ equilibrium consumption is:
\begin{align}
    x_j^* = \frac{\beta^2}{4} \frac{1}{\left(\frac{1}{w} - \alpha + \gamma Y_{\theta_j}^* \right)^{2}} \cdot C_j^{bon}
\end{align}
where $Y_{\theta_j}^*$ is a strictly increasing function of $\frac{\mathcal{C}_{\theta}^{bon}}{\mathcal{C}_{-\theta}^{bon}}$, and $\mathcal{C}_{\theta}^{bon}$ is average Bonacich centrality in identity group $\theta$.
\end{prop}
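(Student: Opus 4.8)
The plan is to reuse the two-stage logic of the proof of \Cref{prop:1}: first solve each agent's best response taking her group status $Y_{\theta_j}$ as fixed, and then pin down $Y_{\theta_j}$ endogenously by a fixed-point argument. Because the cost is now linear and the social-comparison term is a square root, the best-response map turns out to be \emph{affine} in neighbours' consumption with unit loading on $\widehat{G}$, which is exactly why the \emph{standard} Bonacich centrality $C_j^{bon}$ (built from $\widehat{G}$ alone, with no income weighting) appears rather than the generalised version.

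For the first stage I would write the first-order condition, $\partial u_j / \partial x_j = \alpha - \gamma Y_{\theta_j} - \tfrac{1}{w} + \tfrac{\beta}{2}(x_j - R_j)^{-1/2}$. Since the second derivative is $-\tfrac{\beta}{4}(x_j - R_j)^{-3/2} < 0$, utility is strictly concave on $\{x_j > R_j\}$; moreover the derivative diverges to $+\infty$ as $x_j \downarrow R_j$ and tends to $-(\tfrac{1}{w} - \alpha + \gamma Y_{\theta_j}) < 0$ (using $\tfrac1w - \alpha > 0$) as $x_j \to \infty$, so there is a unique interior maximiser. Solving the FOC gives $x_j = R_j + a_{\theta_j}$ with $a_{\theta_j} = \tfrac{\beta^2/4}{(\tfrac1w - \alpha + \gamma Y_{\theta_j})^2}$ depending only on $j$'s identity. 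Substituting $R_j = \sum_k \widehat{G}_{jk} x_k$ and stacking yields $\mathbf{x} = \widehat{G}\mathbf{x} + \mathbf{a}$, hence $\mathbf{x} = (I - \widehat{G})^{-1}\mathbf{a}$ (well-defined by $|\lambda_1(\widehat{G})| < 1$). Exploiting the block structure of $\widehat{G}$ — so that $(I-\widehat{G})^{-1}_{jk} \neq 0$ only when $\theta_k = \theta_j$ — lets me pull the constant $a_{\theta_j}$ out of the sum and obtain $x_j = a_{\theta_j} C_j^{bon}$, which is the claimed form once $Y^*_{\theta_j}$ is determined.

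For the second stage I would substitute $x_j = a_{\theta_j} C_j^{bon}$ into $Y_\theta = \mathbf{x}_\theta / \mathbf{x}_{-\theta}$ to get $Y_\theta = \tfrac{a_\theta \mathcal{C}_\theta^{bon}}{a_{-\theta}\mathcal{C}_{-\theta}^{bon}}$. Using $Y_A = 1/Y_B$ together with the definition of $a_\theta$, everything collapses to a single scalar equation in $Y := Y_A$: writing $b := \tfrac1w - \alpha > 0$ and $\rho := \mathcal{C}_A^{bon}/\mathcal{C}_B^{bon}$, this is
\[
F(Y) := \frac{Y^3 (b + \gamma Y)^2}{(bY + \gamma)^2} = \rho .
\]
Existence and uniqueness of equilibrium then reduce to showing $F$ is a strictly increasing bijection of $(0,\infty)$ onto itself; strict monotonicity also yields that $Y^* = F^{-1}(\rho)$ is strictly increasing in $\rho$, which is precisely the final claim.

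The main obstacle is the monotonicity of $F$. The limits $F(0^+) = 0$ and $F(Y) \to \infty$ as $Y \to \infty$ are immediate, so the crux is $F' > 0$. I would work with $\ln F = 3\ln Y + 2\ln(b + \gamma Y) - 2\ln(bY + \gamma)$, whose derivative is $\tfrac{3}{Y} + \tfrac{2\gamma}{b+\gamma Y} - \tfrac{2b}{bY+\gamma}$. The pleasant simplification is that the last two terms combine to $\tfrac{2(\gamma^2 - b^2)}{(b+\gamma Y)(bY+\gamma)}$, with the $Y$-linear pieces cancelling. If $\gamma \geq b$ positivity is immediate; if $\gamma < b$ I would clear denominators and verify $3(b+\gamma Y)(bY+\gamma) - 2Y(b^2 - \gamma^2) = b^2 Y + 3b\gamma + 3b\gamma Y^2 + 5\gamma^2 Y > 0$, which holds term by term. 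This gives $F' > 0$ on all of $(0,\infty)$ irrespective of the sign of $\gamma - b$, completing the argument and establishing both the closed form and the stated comparative statics of $Y^*_{\theta_j}$ in $\mathcal{C}_\theta^{bon}/\mathcal{C}_{-\theta}^{bon}$.
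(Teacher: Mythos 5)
Your proposal is correct, and its skeleton coincides with the paper's proof: solve the first-order condition holding $Y_{\theta_j}$ fixed to get $x_j = R_j + a_{\theta_j}$ with $a_{\theta_j} = \frac{\beta^2/4}{(\frac{1}{w}-\alpha+\gamma Y_{\theta_j})^2}$, invert $(I-\widehat{G})$ under the spectral condition, use the block structure of $\widehat{G}$ to pull the identity-specific constant out of the sum and obtain $x_j = a_{\theta_j}C_j^{bon}$, then close the system with $Y_A = 1/Y_B$ to reduce everything to one scalar equation. Where you genuinely depart from the paper is in how that scalar equation is analysed. The paper takes square roots (legitimate, as both sides of $Y^3(b+\gamma Y)^2 = \rho\,(bY+\gamma)^2$ are positive on $Y>0$) to obtain the half-integer-power equation $\gamma Y^{5/2} + bY^{3/2} - b\sqrt{\rho}\,Y - \gamma\sqrt{\rho} = 0$, and gets uniqueness from strict convexity of the left-hand side together with its negative value at $Y=0$; monotonicity of $Y^*$ in $\rho$ then follows because $\rho$ enters only negatively. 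You instead keep the rational form $F(Y) = \frac{Y^3(b+\gamma Y)^2}{(bY+\gamma)^2} = \rho$ and prove $F$ is a strictly increasing bijection of $(0,\infty)$ onto itself by logarithmic differentiation, exploiting the cancellation $\frac{2\gamma}{b+\gamma Y} - \frac{2b}{bY+\gamma} = \frac{2(\gamma^2-b^2)}{(b+\gamma Y)(bY+\gamma)}$ and a term-by-term positivity check when $\gamma < b$. Both arguments are sound and of comparable length, but yours buys two things: the comparative static $dY^*/d\rho > 0$ falls out immediately from $Y^* = F^{-1}(\rho)$ rather than from an implicit-function-style observation, and your stage-one verification (strict concavity on $\{x_j > R_j\}$, the derivative diverging as $x_j \downarrow R_j$ and tending to $-(\frac{1}{w}-\alpha+\gamma Y_{\theta_j}) < 0$ at infinity) makes explicit the interior-maximiser step that the paper only gestures at via the sign condition $\gamma Y_{\theta_j} - \alpha + \frac{1}{w} > 0$.
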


The four key features of equilibrium behaviour are the same as in the main model. First, consumption is linearly increasing in own centrality. Second, it is decreasing in the average centrality of those who share the same identity as $j$. Third, own centrality acts multiplicatively on the term containing average centrality. Fourth, the definition of centrality depends only on the within-identity links (i.e. it depends on $\widehat{G}$, rather than on $G$). Note that the definition of centrality here is simpler because we have assumed all agents have the same income. 

The presence of these four key features means that the other results from \Cref{sec:results} will hold here. The exception is the first part of \Cref{rem:2}, because incomes no longer feature in the definition of centrality. 

This variant on the model can therefore also explain the stylised facts. This is suggestive that it is the key novelty of the model -- the multiple sources of status and the substitutability between them -- that is driving the results, and not the peculiarities of the dissonance function. 

\subsection{Proof}
The proof here follows \Cref{prop:1} fairly closely. Consider the marginal utility of consumption:
\begin{align*}
    \frac{d u_j}{d x_j} &= \alpha - \gamma Y_{\theta_j} + \frac{1}{2} \beta (x_j - R_j)^{-0.5} - \frac{1}{w}
\end{align*}

Notice that $\frac{1}{2} \beta (x_j - R_j)^{-0.5} \geq 0$ whenever it is well-defined. So we must have $\gamma Y_{\theta_j} - \alpha + 1/w > 0$ for a best response to exist. This implies that we must have $Y_{\theta_j} \geq (\alpha - 1/w) / \gamma$. Given this, we can now consider $j$'s best response.
\begin{align*}
    (x_j - R_j)^{-0.5} &= \frac{2}{\beta} \left( \gamma Y_{\theta_j} - \alpha + \frac{1}{w} \right)  \\
    \implies x_j - \sum_k \widehat{G}_{jk} x_k &= \frac{\beta^2}{4} \left( \gamma Y_{\theta_j} - \alpha + \frac{1}{w} \right)^{-2}
\end{align*}
%
We now switch to vector notation (in an abuse of notation we use the same notation for a vector of constants as for the equivalent scalar).
\begin{align*}
    \mathbf{x} - \widehat{G} \mathbf{x} &= \frac{\beta^2}{4} \left( \gamma Y_{\theta_j} - \alpha + \frac{1}{w} \right)^{-2}, \\
    \mathbf{x} &= (I - \widehat{G})^{-1} \left[ \frac{\beta^2}{4} \left(\gamma Y_{\theta_j} - \alpha + \frac{1}{w} \right)^{-2} \right].
\end{align*}
We have assumed $|\lambda_1| < 1$, so $(I - \widehat{G})$ is invertible (\citet[Theorem 1]{ballester2006s}). Reverting to scalar notation:
\begin{align*}
    x_j = \sum_k (I - \widehat{G})^{-1}_{jk} \frac{\beta^2}{4} \left( \gamma Y_{\theta_j} - \alpha + \frac{1}{w} \right)^{-2}.
\end{align*}
\emph{A comment:} by construction, $\widehat{G}$ is a block matrix with off-diagonal blocks being all zero. This is because an agent $j$ can only have a link to agent $k$ if $j$ and $k$ share the same identity (i.e. if $\theta_j = \theta_k$). An important consequence of this is that $\theta_k = \theta_j$ whenever $[(I - \widehat{G})^{-1}]_{jk} > 0$. This means we can treat the term $Y_{\theta_k}^{-1}$ as a constant and pull it out of the sum. And we can use the (standard) definition of Bonacich centrality. So
\begin{align*}
    x_j^* = \frac{\beta^2}{4} \left( \gamma Y_{\theta_j} - \alpha + \frac{1}{w} \right)^{-2} C_j^{bon}
\end{align*}

Recall that $Y_{\theta_j} = \frac{\mathbf{x}_A}{\mathbf{x}_B}$ if $\theta_j = A$ (and similarly if $\theta_j = B)$ and so is endogenous, where $\mathbf{x}_{\theta} = \frac{1}{|I(\theta)|} \sum_{k \in I(\theta)} x_k$. 
In an abuse of notation, we will use $Y_A$ to denote $Y_{\theta_j}$ for $j$ in identity $A$ (and similarly for $Y_B$). We will also do the rest of the proof for some $j$ in identity $A$. It is clear that everything works identically for identity $B$. 
Hence, by substituting our equation for consumption into this definition:
\begin{align*}
Y_A \quad = \quad 
    \frac{ \frac{1}{|I(A)|} \sum_{k \in I(A)} \frac{\beta^2}{4} \left( \gamma Y_A - \alpha + \frac{1}{w} \right)^{-2} C_j^{bon} }{ \frac{1}{|I(B)|} \sum_{k \in I(B)} \frac{\beta^2}{4} \left( \gamma Y_B - \alpha + \frac{1}{w} \right)^{-2} C_k^{bon} } 
\quad = \quad 
    \frac{\left( \gamma Y_A - \alpha + \frac{1}{w} \right)^{-2} \mathcal{C}_A^{bon}}{\left( \gamma Y_B - \alpha + \frac{1}{w} \right)^{-2} \mathcal{C}_B^{bon}}.
\end{align*}
where $\mathcal{C}_{\theta}^{bon}$ denotes the average Bonacich centrality of agents in identity group $\theta$. It is also clear from the definition that $Y_A = 1/Y_B$. Substituting this, and rearranging yields: 
\begin{align}
    \left( \gamma Y_A - \alpha + \frac{1}{w} \right) Y_A^{0.5} = \left( \gamma \frac{1}{Y_A}  \alpha + \frac{1}{w} \right) \cdot \left( \frac{\mathcal{C}_A^{bon}}{\mathcal{C}_B^{bon}} \right)^{0.5},
\end{align}
which further re-arranges to
\begin{align}\label{eq:unique_YA}
        \gamma Y_A^{\frac{5}{2}} + \left( \frac{1}{w} - \alpha \right) Y_A^{\frac{3}{2}} - \left( \frac{1}{w} - \alpha \right) \left( \frac{\mathcal{C}_A^{bon}}{\mathcal{C}_B^{bon}} \right)^{0.5} Y - \gamma \left( \frac{\mathcal{C}_A^{bon}}{\mathcal{C}_B^{bon}} \right)^{0.5} = 0.
\end{align}
It is straightforward to see that the Left-Hand Side of \cref{eq:unique_YA} is strictly convex and is negative when $Y_A = 0$. So there is a unique non-negative solution to \Cref{eq:unique_YA}. Denote this unique solution $Y_A^*$. Finally, because the term $\mathcal{C}_A^{bon} / \mathcal{C}_B^{bon}$ only enters negatively, it is also clear that the value of $Y_A$ that solves \Cref{eq:unique_YA} is strictly increasing in $\mathcal{C}_A^{bon} / \mathcal{C}_B^{bon}$. \hfill \qed

\section{Absolute Status}\label{OA:abs status}
Agents gain status from belonging to a group with high average consumption. An important feature of our model is that group status is \emph{relative}. What matters is average consumption of an agent's own group relative to the other group. This notion that status is relative is perhaps the standard view. 

Nevertheless, here we show what happens if status depends instead on \emph{absolute} group consumption. This generates one change in the model: an alternative definition of $Y_\theta$. With absolute status, we now have $Y_\theta = \mathbf{x}_\theta$. This change has very little impact on equilibrium play. All that changes is the definition of $Y^*_{\theta_j}$. Formally:
\begin{prop}
There exists a unique equilibrium. For all $j$, equilibrium consumption is:
    \begin{align}
    x_j^* = S'_x(Y_{\theta_j}^*) \cdot C_j, \text{  where  } Y_{\theta_j}^* \text{ solves } Y_{\theta} =  S'_x(Y_\theta) \mathcal{C}_\theta
\end{align}
\end{prop}

\noindent \emph{Proof.} Follows directly from the proof to \Cref{prop:1} \hfill \qed  \\

What it does do, however, is shut down the strategic interactions between different groups. Behaviour of agents in group $\theta'$ now has no impact on agents in group $\theta$. This induces two changes to the results in \Cref{sec:results}: in \Cref{rem:1} $xj^*$ is now invariant in $\mathcal{C}_{-\theta}$, and in \Cref{prop:slutsky}(i) if $\theta_j \neq \theta_k$ then $x_j^*$ is invariant in $w_k$. 

This means that the model with absolute status is unable to explain our third stylised fact (Fact 3). \Cref{prop:2}(iii) no longer holds with absolute status. Additionally, each group makes decisions independent of all other groups. It is through the relative notion of status that our model is linking the strategic decisions of different groups.

\end{document}